\def\testend{ \end{document}}
\def\testend{\end{document}}
\theoremstyle{plain}
\newtheorem{theorem}{Theorem}[section]
\newtheorem{lemma}{Lemma}[theorem]
\newtheorem{proposition}[theorem]{Proposition}
\newtheorem{corollary}[theorem]{Corollary}
\newtheorem{definition}{Definition}[section]
\newtheorem{comment}{Comment}[section]
\def\vsk{ \vskip.075in\noindent }
\def\medskip{\vskip.15in\noindent}
\def\ket#1{|#1 \rangle}
\def\bra#1{\langle #1|}
\def\tilcal#1{\tilde{\cal C}}
\def\ncl{{\not {\hskip-.03in C}}}
\def\pr{^\prime}
\def\half{\textstyle{1\over 2}}
\def\sys{{\cal S}}
\def\cfield{\mathbb{C}}
\def\rfield{\mathbb{R}}
\def\qfield{\mathbb{H}}
\begin{document}
\title{{Derivation of the Rules of Quantum Mechanics from 
Information-Theoretic Axioms
}}
\author{Daniel I. Fivel $^\dagger$}
\date{\today}
\maketitle
\centerline{{\bf Abstract}}
\vsk
Conventional  quantum mechanics with a complex Hilbert space and the Born Rule is derived from five axioms describing
 properties of  probability distributions for the outcome of measurements.
Axioms I,II,III are common to quantum mechanics and  hidden variable theories.
 Axiom IV recognizes a
phenomenon, first noted  by Turing and von Neumann,  in which the increase in entropy resulting from a measurement is reduced by a suitable intermediate measurement.  This is shown to be impossible for local hidden variable theories.   Axiom IV, together with the first three, almost suffice to deduce the conventional  rules but allow some exotic, alternatives such as  real or quaternionic 
quantum mechanics.
 Axiom V recognizes a property of the distribution of outcomes of random measurements on  qubits  which holds only  in the complex Hilbert space model. It is then shown that the five axioms also imply the conventional rules  for
 all dimensions.  
\vsk
\section{Introduction}
Because the conventional rules of quantum mechanics (CRQM) are abstract and unintuitive, there have been many attempts since their formulation by Dirac and von Neumann to derive them from axioms extracted from the experimental data. {\em The axioms are supposed to isolate the essential ingredients of the data  that require the CRQM.}  Specifically they should answer the following questions: (1) Why must we
identify the set $\sys$ of pure states with a Hilbert space and in particular  a Hilbert space over the complex field
 ($\cfield$) rather than the real field ($\rfield$) or the  quaternions ($\qfield$)?  (2) 
Why must we calculate probabilities by the Born Rule? (3)  What precisely is it  that excludes a hidden-variable explanation?
\vsk
We obtain an important clue to the construction of a satisfactory axiomatic system that addresses these questions when we recall that the
celebrated quantum logic approach of Birkhoff and von Neumann\cite{BvN} gives only a very  limited restriction on the model. It endows the
system with a projective space structure and hence gives meaning to ``dimension"  but imposes
no restriction at all for dimension $N = 2$ (e.g.\  qubits or spin $\half$)  and almost none for  $N = 3$ (e.g.\ spin $1$). 
Even for $N > 3$, it only tells us that $\sys$ is a projective space over some sort of skew field $\mathbb{F}$, which is
insufficient to derive the Born Rule by means of Gleason's Theorem\cite{GLEA}. Even if $\mathbb{F}$ is assumed to be $\cfield$,  Gleason's theorem only works for $N > 2$. 
\vsk
\noindent
The fact that  the CRQM for qubits is the most difficult to derive suggests an axiomatic strategy based on information theory in which  $N = 2$ subspaces play the fundamental role.  
Reinforcement for this strategy comes from the recognition \cite{SYK,SPBC,WOOT} that,
even for a system prepared in a pure state, there are different probability distributions for the outcomes of different measurements.
  When this is taken into account, it was shown in  \cite{SPBC} that the  appropriate measure of the
removal of uncertainty when the outcome of a measurement is known
is not the von Neumann entropy, but rather the so-called  {\em information entropy} computed by averaging the Shannon entropy over the outcomes of all possible measurements.
Unlike the von Neumann entropy, the
\vsk$----------------------$\\
\noindent $\dagger$
Department of Physics, University of Maryland

College Park, MD 20742

Electronic address: fivel@umd.edu
\eject
\noindent
 information entropy is {\em model dependent}, and as the following table shows,  can distinguish  between real, complex, and quaternionic quantum mechanics in $N = 2$ subspaces.
\vsk
\centerline{Pure state entropy for $N=2$.}
\begin{equation}\label{purestateentropy}
\begin{matrix}
 &\rfield&\cfield&\qfield\\
 \text{von Neumann entropy}&0&0&0\\
\text{information entropy}&2\ln 2 - 1&1/2&7/12\\
\end{matrix}
\end{equation}
The calculations can be found  in section {\bf{\ref{pickingout}}} below.
\vsk
We shall exploit these observations to construct a derivation of the CRQM. 
\vsk
\section{Preliminary observations}
The data consists of a table of the function
\begin{equation}
 p(x,y) \equiv x(y) \to [0,1]
 \end{equation}
 which gives the attenuation of a beam of particles prepared by a device labeled $x$ after passage through a detecting device labeled $y$.  In optics, for example, $x$ labels the polarizers and $y$ the analyzers .  
  Since the same device may play either role, we  use the same set 
$\sys$  of labels for each argument. We  refer to devices labeled by the elements of $\sys$ as {\em filters} \cite{MIEL}. We refer to a table giving the values of $x(y)$ for
$\sys$ as  its  $p$-table. 
\vsk
We confine our discussion to  systems which the CRQM describes by  a Hilbert space of finite dimension $N$. With this
restriction the CRQM may be stated as follows:
\vsk 
{\bf The CRQM}:
There is  a map $z \to \bar{z}$  from  $\sys$ to a complex projective space  $ \cfield P^{N-1}$ for some finite $N$
 such that
\begin{equation}\label{bornrule}
x(y) =  |\hat{x}^*\cdot \hat{y}|^2
\end{equation}
where 
$$\hat{z} \equiv \bar{z}/|\bar{z}|. \nonumber$$
\noindent
We refer to this map as  the  {\em Born Rule correspondence} (BRC) between ${\cal S}$ and $\cfield P^{N-1}$.
\vsk
{\em Our axioms will   be  a minimal set of properties of a $p$-table from which we can deduce the CRQM as stated. Thus
our axiomatization  is based on the empirical data in contrast to that of Hardy\cite{HARDY} which is based on what
he describes as ``reasonable axioms which might well have been posited without any particular access to the empirical data".}
\vsk
One immediately observes two kinds of structure in the $p$-table of a quantum mechanical system: a metrical structure and a statistical structure. What is special about the $p$-table of a quantum mechanical system is the way these two structures interact.
\vsk
 It will be helpful
to have before us the $p$-table of a very simple 
system $\sys_o$ with $N = 2$  which exhibits the properties we will examine in this section.
 This table is a skeleton of quantum mechanics in which there are just
 four filters  $x,x\pr,y,y\pr$ corresponding to  states of linearly polarized light which the CRQM describes by
\begin{equation}\label{CRQMreal}
\hat{ x} = (1,0),\; \; \hat{x}\pr = (0,1), \; \; \hat{y}= 2^{-1/2} (1,1),\; \;  \hat{y}\pr=  2^{-1/2} (1,-1).
\end{equation}
The  $p$-table for $\sys_o$ is:
\begin{equation}\label{toymodel}
\begin{matrix}
 &x&x\pr&y&y\pr\\
x&1&0&1/2&1/2\\
x\pr&0&1&1/2&1/2\\
y&1/2&1/2&1&0\\
y\pr&1/2&1/2&0&1\\
\end{matrix}
\end{equation}
\vsk
The metrical structure is seen in the resemblance between the $p$-table and a  road atlas which
gives the  distances between a set of  cities. Our problem is analogous to that of determining the geometry of the earth
from such an atlas.
Like a road atlas, which has the special value zero on and only on the diagonal, the $p$-table has the special value unity on and only on the diagonal.  Like an atlas it is also symmetric about the diagonal. 
These properties of $\sys_o$ are shared by the $p$-table of any quantum mechanical system
and are expressed by our first axiom:

\vsk
{\bf Axiom I}:
For every  $w,z  \in {\cal S}$:
\begin{equation}\label{basic}
  w(z) = 1 \text{ if and only if } w = z,
\end{equation}
$$w(z) = z(w). \nonumber$$

\vsk
The conditional probability that a particle in a beam prepared by a $u$ filter
will pass a $w$ filter given that it has first passed a $v$ filter is $u(v)v(w)$. In a classical system the result
would be the same if $v$ and $w$ were interchanged.  It follows from (\ref{basic}) that this is the case {\em for all} $u$ if and only if  either $v(w) = 1$ or $v(w) = 0$. In the former case the the two elements are identical.
\begin{definition}
Two elements $u,v$ such that $u(v) = 0$ are said to be orthogonal and we write $u \perp v$ . If $u,v$ are either identical or orthogonal they are said to be ``compatible" or ``classically related".  If they are neither identical nor orthogonal we write $u \ncl v$. \end{definition}
\vsk
In $\sys_o$ observe that  $x\ncl y, x\ncl y\pr,x\pr \ncl y,$ and $x\pr \ncl y\pr$  
so that $\sys_o$ cannot be decomposed into
the union of two subsets in which the elements of one are classically related to all members of the other.  
\vsk
\begin{definition}
A {\em frame} in a subset $\sys^*$ of $\sys$ is a maximal set of mutually orthogonal elements. 
\end{definition}
\vsk
It follows from (\ref{basic}) that if $x(z) = y(z)$ for all $z$, then $x(y) = 1$, and hence $x = y$.
Thus there is a one-one correspondence between the elements $x$ and the functions $x(\; )$, and
there is a natural metric on $\sys$:
\begin{equation}\label{dmetric}
d(x,y) = \sup_{z \in \sys}|x(z) - y(z)|.
\end{equation}
\begin{proposition}
The functions $u(\;)$  are continuous  with respect to the topology defined by the $d$-metric.
\end{proposition}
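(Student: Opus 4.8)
The plan is to show that each function $u(\;)\colon \sys \to [0,1]$ is Lipschitz with constant one with respect to the $d$-metric, which certainly implies continuity. The single essential ingredient will be the symmetry built into Axiom I.

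First I would fix $u \in \sys$ and take two arbitrary points $z, z' \in \sys$. Using the symmetry relation $w(z) = z(w)$ from (\ref{basic}), I rewrite the values of the function as $u(z) = z(u)$ and $u(z') = z'(u)$, so that
\begin{equation}
|u(z) - u(z')| = |z(u) - z'(u)|. \nonumber
\end{equation}
The right-hand side is now precisely one of the terms appearing in the supremum that defines $d(z,z')$.

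Next I would invoke the definition (\ref{dmetric}) of the metric. Since $u$ is one particular element of $\sys$, the quantity $|z(u) - z'(u)|$ cannot exceed the supremum over all $w \in \sys$ of $|z(w) - z'(w)|$, which is exactly $d(z,z')$. Combining this with the previous step yields
\begin{equation}
|u(z) - u(z')| \le d(z,z'), \nonumber
\end{equation}
so $u(\;)$ is $1$-Lipschitz and hence (in fact uniformly) continuous.

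There is no genuinely hard part here: the proposition is a direct consequence of the fact that the same supremum which measures the distance between $z$ and $z'$ also dominates the variation of every individual function $u(\;)$, once the symmetry of Axiom I has been used to interchange the roles of the fixed and the varying arguments. The one point requiring care is to apply that symmetry so that the element being \emph{varied}, rather than the fixed element $u$, occupies the argument slot over which the supremum ranges; without this interchange the bound by $d(z,z')$ would not be immediate.
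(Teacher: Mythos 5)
Your proof is correct and follows essentially the same argument as the paper: both use the symmetry $u(z)=z(u)$ from Axiom I to rewrite the difference, then bound it by the supremum defining $d$, yielding the $1$-Lipschitz estimate $|u(z)-u(z')| \le d(z,z')$. The paper compresses this into a single displayed inequality, but the content is identical.
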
\noindent
\begin{proof}
\begin{equation}
|u(v) - u(w)| = |v(u) - w(u)| \leq \sup_{u\in \sys}|v(u) - w(u)|  = d(v,w).
\end{equation}
\end{proof}
\vsk
{\bf Note:}  Since there is a one-one correspondence  $u \leftrightarrow u(\;)$, we shall often
use $u$ to mean $u(\;)$ when no confusion will result.
\vsk
Observe that  $d(u,v)$  takes values between zero and unity, the former if and only if  $u$ and $v$  are identical and the latter if and only if $u$ and $v$ are orthogonal.
Thus the elements of a frame are a maximal set of maximally separated elements in the $d$-metric.
\vsk
The following axiom insures that the number of elements in any frame is finite and that ${\cal S}$ is complete.
\vsk
{\bf Axiom II: } 
{\em ${\cal S}$ is compact in the $d$-metric.}
\vsk
\begin{proposition}
The number of elements in any frame is finite.
\end{proposition}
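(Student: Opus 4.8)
The plan is to exploit the observation just recorded: the elements of a frame are maximally separated in the $d$-metric. Indeed, any two distinct elements $u,v$ of a frame are mutually orthogonal, and since $d(u,v) = 1$ precisely when $u \perp v$, distinct frame elements lie at distance exactly $1$ from one another. The strategy is then to argue that a compact metric space simply cannot harbor infinitely many points that are uniformly separated by distance $1$, so the frame must be finite.

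Concretely, I would first invoke the standard fact that a compact metric space is totally bounded; compactness of $\sys$ in the $d$-metric is exactly the content of Axiom II. Total boundedness says that for every $\epsilon > 0$ the space $\sys$ is covered by finitely many open balls of radius $\epsilon$. Applying this with $\epsilon = 1/2$ yields a finite collection of balls $B_1,\dots,B_n$ of radius $1/2$ whose union is all of $\sys$. The next step is to show that each such ball contains at most one element of the frame: if distinct frame elements $u$ and $v$ both lay in a single ball $B_k = B(c,1/2)$, the triangle inequality would give $d(u,v) \leq d(u,c) + d(c,v) < 1/2 + 1/2 = 1$, contradicting $d(u,v) = 1$. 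Hence the frame has at most $n$ elements and is finite.

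I do not expect any genuine obstacle here; the whole argument is a routine total-boundedness deduction. The only point requiring care is the identification of the mutual distance between distinct frame elements as exactly $1$, which is precisely the remark preceding the statement. An alternative route proceeds through sequential compactness: an infinite frame would supply a sequence of distinct points every two of which are at distance $1$, so no subsequence could be Cauchy, contradicting the sequential compactness guaranteed by Axiom II. I prefer the total-boundedness version, since it is more direct and furnishes an explicit bound on the size of the frame.
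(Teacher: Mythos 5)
Your proof is correct and follows the same route as the paper, which likewise invokes Axiom II to get total boundedness and then notes that a totally bounded space cannot contain infinitely many points separated by a fixed constant. You merely fill in the details the paper leaves implicit (the $\epsilon = 1/2$ ball cover, the triangle inequality, and the fact that distinct frame elements satisfy $d(u,v)=1$), which is a faithful expansion rather than a different argument.
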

\begin{proof}
A compact metric space is closed and totally bounded \cite{SIER}.  The latter implies that at most a finite number of elements
can be separated by any constant amount.
\end{proof}
\noindent
{\bf Note}:  It is this axiom  that will  restrict us to systems represented in
the CRQM   by   finite dimensional Hilbert spaces. The closure property will play a role
later in the discussion.

\vsk
A frame consisting of the elements ${\bf x} = \{x_1,x_2,\cdots\}$ will be denoted $F_{\bf x}$. In $\sys_o$ there are just two frames $F_{\bf x} = \{x,x\pr\}$ and $F_{\bf y} = \{y,y\pr\}$.
\vsk
\begin{definition}
A {\em frame function} on 
 $\sys^*$ is a probabiity distribution function on every frame  in $\sys^*$, i.e.\ a  function $\zeta$ from $\sys^*$ to $[0,1]$ such that 
 \begin{equation}
\sum_j \zeta(x_j) = 1 \;\text{ for every frame } F_{\bf x}.
\end{equation}
\end{definition}
\vsk
Note:  We use the term ``frame function" for what Gleason\cite{GLEA} calls a non-negative frame function of weight unity.
\vsk
{\bf Axiom III: } 
{\em The function $a(x) $ is a frame function on $\sys$ for every $a \in \sys$.}
\vsk
We refer to frame functions formed in this way from the elements of $\sys$ as {\em pure states}.
\vsk
The elements of a frame are the possible outcomes of some measurement. The physical interpretation of Axiom III is that   $a(x_j)$ is the probability of the  outcome $x_j$ when the measurement is performed on a system in
the state $a$. In $\sys_o$ there are four pure states  $x,x\pr,y,y\pr$ .
\vsk
In view of the symmetry of $p$, Axiom III also says that for every frame $F_{\bf x}$ in $\sys$ we have
\begin{equation}\label{frameisbasis}
\sum_{j} x_j(a) = 1 \text{ for every element } a \in \sys.
\end{equation}
A set of functions with this property is said to be a {\em basis} of $\sys$, whence  Axiom III is equivalent to the assertion
 that every frame in $\sys$ is a basis of $\sys$.
 \vsk  We are led to the notion of ``subspace" by considering  subsets $\sys^*$ of $\sys$ for which
 a given ({\em not necessarily maximal}) set of mutually orthogonal elements is a basis. For elements in this subset, the only  possible outcomes of a specific measurement 
will be one of the finite set of elements of the basis.  
\begin{definition}
Let $ F_{\bf x}^* \equiv \{x_1,\cdots,x_N\}$ be a  set of mutually orthogonal elements of $\sys$.
The subspace $\,\sys^*$  of $\sys$ spanned by $F_{\bf x}^*$ consists of elements $z$  for
which $F_{\bf x}^*$ is a basis i.e.\ for which
\begin{equation}\label{subspace}
\sum_{j=1}^N x_j(z) = 1.
\end{equation}
\end{definition}
\noindent
Evidently $\{x_1,\cdots,x_N\}$ is a maximal orthogonal set in $\sys^*$ since no element can be orthogonal
to all of them and satisfy (\ref{subspace}). Hence $F_{\bf x}^* \equiv \{x_1,\cdots,x_N\}$ is a frame in $\sys^*$. 
\vsk
\begin{proposition}\label{anyorthog}
An element orthogonal to every element of a frame in $\sys^*$ is orthogonal to every element of $\sys^*$.
\end{proposition}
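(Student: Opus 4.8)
The plan is to reduce the proposition to Axiom III together with the defining equation $\sum_{j=1}^N x_j(z)=1$ of $\sys^*$ and the non-negativity of the $p$-values. Suppose $w$ is orthogonal to every element of a frame $F=\{f_1,\dots,f_M\}$ in $\sys^*$. Since $\{w\}\cup F$ is then a mutually orthogonal subset of $\sys$, and frames are finite by Axiom II, I would enlarge it to a frame $G$ of the full space $\sys$. Applying Axiom III (every frame of $\sys$ is a basis) to an arbitrary $z\in\sys^*$ gives $w(z)+\sum_i f_i(z)+R(z)=1$, where $R(z)\ge 0$ collects the remaining frame terms. The whole argument then turns on showing that $\sum_i f_i(z)=1$ for every $z\in\sys^*$, for then non-negativity of $w(z)$ and of $R(z)$ forces $w(z)=0$, which is exactly the claim.

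Thus the crux is the auxiliary fact that every frame of $\sys^*$ is itself a basis of $\sys^*$. For the spanning frame $F_{\bf x}^*$ this is its definition, but for a general $F$ it must be proved, and this is the step I expect to be the main obstacle, since a naive completion argument is circular. My route around this uses the orthogonal complement of the spanning frame. First I would extend $F_{\bf x}^*$ to a frame $F_{\bf x}^*\cup H'$ of $\sys$; Axiom III applied to $z\in\sys^*$ together with $\sum_j x_j(z)=1$ and non-negativity forces $h'(z)=0$ for every $h'\in H'$, so each element of $H'$ is orthogonal to all of $\sys^*$. Conversely, Axiom III applied to the same frame, using the symmetry of $p$, shows that any $u$ with $u\perp H'$ satisfies $\sum_j x_j(u)=1$ and hence lies in $\sys^*$; membership in $\sys^*$ is thereby detected by orthogonality to $H'$.

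With this characterization in hand I would complete $F\cup H'$ — which is mutually orthogonal, since the $f_i$ lie in $\sys^*$ and are therefore orthogonal to $H'$ — to a frame $F\cup H'\cup K$ of $\sys$. Any $k\in K$ is orthogonal to $H'$, hence lies in $\sys^*$ by the previous paragraph, and is also orthogonal to every $f_i$; but $\{f_i\}$ is a maximal orthogonal set in $\sys^*$, so no such $k$ can exist and $K=\varnothing$. Axiom III applied to the frame $F\cup H'$ then reads $\sum_i f_i(z)+\sum_{h'} h'(z)=1$, and since $\sum_{h'} h'(z)=0$ on $\sys^*$ we obtain $\sum_i f_i(z)=1$ for all $z\in\sys^*$, establishing that $F$ is a basis of $\sys^*$. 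Feeding this back into the first paragraph closes the proof: $w(z)=0$ for every $z\in\sys^*$, i.e. $w$ is orthogonal to all of $\sys^*$. The delicate point throughout is the non-circular proof that arbitrary frames of $\sys^*$ are bases, which is why the orthogonal-complement detour and the maximality argument carry the real weight.
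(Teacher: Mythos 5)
Your proof is correct, and its opening move --- complete $\{w\}\cup F$ to a frame of $\sys$, apply Axiom III, and discard the extra non-negative terms --- is exactly the paper's proof. But the paper stops there: it asserts that ``the first term is unity for $z \in \sys^*$'', i.e.\ that $\sum_i f_i(z) = 1$ on $\sys^*$, which is true by the definition (\ref{subspace}) only when $F$ is the spanning frame $F_{\bf x}^*$ (and indeed the paper's notation $x_1,\cdots,x_N$ suggests that is the case intended). You correctly identify that for an \emph{arbitrary} frame of $\sys^*$ this equality is the real content of the statement, and you supply the missing lemma: characterize membership in $\sys^*$ as orthogonality to the completion set $H'$ of the spanning frame, show that $F\cup H'$ is already maximal in $\sys$ (any further element $k$ would be orthogonal to $H'$, hence lie in $\sys^*$ and be orthogonal to all of $F$, contradicting the maximality of $F$ in $\sys^*$), and read off $\sum_i f_i(z) = 1$ on $\sys^*$ from Axiom III. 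This is precisely the mechanism the paper deploys one step later, in the proof of Proposition \ref{anyframespans}, where $\{y_1,y_2,\cdots,x_{N+1},\cdots\}$ is shown to be a maximal orthogonal set and the general-frame identity (\ref{allframesequivalent}) is established; note also that the paper only ever \emph{invokes} the present proposition for the spanning frame (in proving Proposition \ref{anyframespans}, and for $\{c,c\pr\}$, which spans ${\cal P}_{cc\pr}$), so its development is not circular --- the gap is a gloss in this proof, not in the logic of the paper. In short: your version buys a self-contained proof of the full statement as written, at the cost of duplicating half of Proposition \ref{anyframespans}; the paper's buys brevity, at the cost of a proof that, taken literally, covers only the spanning-frame case it later uses.
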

\begin{proof}
If $y$ is orthogonal to a frame $F_x = \{x_1,\cdots,x_N\}$  on $\sys^*$, then $\{y,x_1,\cdots,x_N\}$ is a subset of some maximal set of mutually orthogonal elements of $\sys$ . Hence by (\ref{frameisbasis}) $\sum_{j=1}^N {x_j}(z) + y(z) \leq 1$. But the first term is unity for $z \in \sys^*$ whence
$y(z) = 0$.
\end{proof}
\begin{proposition}\label{anyframespans}Every frame in $\sys^*$ has  the {\em same number}  of elements and spans $\sys^*$.
Moreover, if $\{x_1,x_2,\cdots,x_N\}$ and $\{y_1,y_2,\cdots,y_N\}$ are two such frames, then
\begin{equation}\label{allframesequivalent}
\sum_{j=1}^N {y_j}(w) = \sum_{j=1}^N {x_j}(w) \text{ for all } w \in \sys
\end{equation}
with common value unity if and only if $w \in \sys^*$. 
\end{proposition}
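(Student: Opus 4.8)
The plan is to split the statement into three parts --- equal cardinality of frames, the identity of the frame-sums as functions on all of $\sys$, and the characterization of $\sys^*$ by the value unity --- and to drive all three by the same device: completing a frame of $\sys^*$ to a frame of the whole space and invoking Proposition \ref{anyorthog}.

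First I would handle cardinality by double counting. Let $\{x_1,\dots,x_N\}$ be the defining frame of $\sys^*$ and $\{y_1,\dots,y_M\}$ any other frame in $\sys^*$. Since each $y_i \in \sys^*$, the defining relation (\ref{subspace}) gives $\sum_{j=1}^N x_j(y_i) = 1$, whence $\sum_{i,j} x_j(y_i) = M$. To evaluate the same double sum the other way I need $\sum_{i=1}^M y_i(x_j) = 1$, i.e.\ that the $y$-frame is also a basis at the points $x_j$. This is where Proposition \ref{anyorthog} enters: extend $\{y_1,\dots,y_M\}$ to a maximal orthogonal set $\{y_1,\dots,y_M,w_1,\dots,w_K\}$ of $\sys$ (possible since compactness bounds the size of orthogonal sets), so each added $w_l$ is orthogonal to the frame $\{y_i\}$ of $\sys^*$ and hence, by Proposition \ref{anyorthog}, orthogonal to all of $\sys^*$, in particular to each $x_j$. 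Applying the basis identity (\ref{frameisbasis}) for this frame of $\sys$ at the point $x_j$ kills the $w_l$ terms and yields $\sum_{i=1}^M y_i(x_j) = 1$. Summing over $j$ gives $\sum_{i,j} y_i(x_j) = N$, and symmetry of $p$ equates the two double sums, so $M = N$.

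This same completion argument proves the function identity, which is the delicate part. Let $\{v_1,\dots,v_L\}$ complete the defining frame to a frame $\{x_1,\dots,x_N,v_1,\dots,v_L\}$ of $\sys$; by Proposition \ref{anyorthog} each $v_m \perp \sys^*$. The crucial observation is that $\{y_1,\dots,y_N,v_1,\dots,v_L\}$ is then also a mutually orthogonal set of the same size $N+L$, since each $v_m$ is orthogonal to every $y_j \in \sys^*$. Now $\sys$ is itself a subspace spanned by any of its frames via (\ref{frameisbasis}), so the cardinality argument above applies to $\sys$ too and every frame of $\sys$ has the common cardinality $N+L$; hence this orthogonal set is already maximal and is a frame of $\sys$. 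Writing (\ref{frameisbasis}) for the two frames $\{x_j\}\cup\{v_m\}$ and $\{y_j\}\cup\{v_m\}$, which share the same tail $\{v_m\}$, and subtracting, the $v_m(w)$ terms cancel and I obtain $\sum_{j=1}^N y_j(w) = \sum_{j=1}^N x_j(w)$ for every $w \in \sys$.

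Finally the characterization is immediate: by (\ref{subspace}) the common value equals unity exactly when $w \in \sys^*$, and the displayed identity shows this value is frame-independent, so each frame of $\sys^*$ spans $\sys^*$. The main obstacle is the middle step --- upgrading the ``basis on $\sys^*$'' relations, valid only for $w \in \sys^*$, to an identity valid for \emph{all} $w \in \sys$. The resolution is to force the two frames to share a common orthogonal complement $\{v_m\}$, which hinges on knowing that an orthogonal set attaining the full dimension $N+L$ must be maximal; that in turn requires the cardinality result to be in hand first, both for $\sys^*$ and, applied self-referentially, for $\sys$ itself.
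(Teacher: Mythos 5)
Your overall machinery --- completing to frames of $\sys$ with a common orthogonal tail, subtracting two instances of (\ref{frameisbasis}), and double counting --- is exactly the paper's, but your ordering introduces a circularity that the paper is careful to avoid. The flaw is in your first step, where you conclude that each $w_l$ in the completion of $\{y_1,\dots,y_M\}$ is orthogonal to all of $\sys^*$ ``by Proposition \ref{anyorthog}.'' As stated, that proposition does cover any frame in $\sys^*$, but its proof rests on the step ``the first term is unity for $z \in \sys^*$,'' i.e.\ on the frame in question being a \emph{spanning} frame; at this stage the only frame known to span $\sys^*$ is the defining frame $F_{\bf x}^*$ of (\ref{subspace}). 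That an arbitrary maximal orthogonal set $\{y_i\}$ of $\sys^*$ spans is precisely part of what Proposition \ref{anyframespans} asserts, so invoking the general case of \ref{anyorthog} for the $y$-frame presupposes the conclusion. Without it, your double count delivers only $\sum_i y_i(x_j) \leq 1$ and hence $M \leq N$, not equality; and since your middle step certifies maximality of $\{y_1,\dots,y_M,v_1,\dots,v_L\}$ by counting it against the common frame size $N+L$ of $\sys$, which requires $M = N$, the gap propagates into the function identity as well.

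The repair is the paper's maximality argument, which is direct rather than cardinality-based. Note first that your tail orthogonality $v_m \perp y_i$ is sound, since it uses \ref{anyorthog} only for the defining frame ($v_m \perp x_1,\dots,x_N$ implies $v_m \perp \sys^*$). To see that the mutually orthogonal set $\{y_1,\dots,y_M,v_1,\dots,v_L\}$ is maximal in $\sys$, suppose $z$ is orthogonal to all of it; then $z \perp v_1,\dots,v_L$ forces $\sum_{j=1}^N x_j(z) = 1$ by (\ref{frameisbasis}) applied to the frame $\{x_j\}\cup\{v_m\}$, so $z \in \sys^*$ by (\ref{subspace}), and then $z$ cannot be orthogonal to every $y_i$ because $\{y_i\}$ is maximal in $\sys^*$ --- a contradiction. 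With maximality in hand, your subtraction of the two (\ref{frameisbasis}) identities goes through verbatim and yields (\ref{allframesequivalent}), with common value unity exactly on $\sys^*$ by (\ref{subspace}); the cardinality $M = N$ then follows last, either from your double count (now legitimate, since both frames span) or from the equal size of frames of $\sys$. This inverted order --- identity and spanning first, cardinality last --- is exactly how the paper's proof proceeds.
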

\begin{proof}
Extend $F_{\bf x}^*$ to a frame $F_{\bf x} = \{x_1,x_2,\cdots, x_N,x_{N+1}, \cdots \}$ in $\sys$.
By proposition {\bf{\ref{anyorthog}}}  the elements $\{x_{N+1},\cdots\}$ are orthogonal to every
element of $\sys^*$ and hence to any maximal set $\{y_1,y_2,\cdots \}$ of mutually orthogonal
elements of $\sys^*$. If an element $z$ is orthogonal to $x_{N + 1},\cdots$, then $\sum_{j=1}^N {x_j}(z) = 1$,
so $z \in \sys^*$ and hence cannot be orthogonal to $\{y_1,y_2,\cdots\}$ since this is a maximal set of
mutually orthogonal elements of $\sys^*$.  Hence $\{y_1,y_2,\cdots,x_{N+1},\cdots\}$ is a maximal
set of mutually orthogonal elements of $\sys$, so that for all $w \in \sys$ we have
\begin{equation}
\sum_j {y_j}(w) + \sum_{j= N+1}^\infty {x_j}(w) = 1.
\end{equation}
Since 
$\sum_{j = 1}^N {x_j}(w) + \sum_{j=N+1}^\infty {x_j}(w) = 1$ it follows that
$\sum_j {y_j}(w) = \sum_{j = 1}^N {x_j}(w)$ with common value unity if and only if $w \in \sys^*$.
Thus $F_{\bf y}^* = \{y_1,y_2,\cdots\}$ spans $\sys^*,$ and it remains to prove that this set
contains $N$ elements.
Let $ \{x_1,\cdots,x_N\}$ and $ \{y_1,\cdots,y_{M}\}$ be frames on $\sys^*$ so that for any $z,w \in \sys^*$ 
\begin{equation}
\sum_{i=1}^N {x_i}(z) = 1 = \sum_{j=1}^ {M}{y_j}(w).
\end{equation}
Put $z = y_j$ on the left side and sum on $j$ to obtain $M$, and put $w = x_i$ on the right side and sum on $i$ to obtain $N$.  The two double sums are equal by the symmetry (\ref{basic}). 
\end{proof}
\begin{definition}
The number $N$ is referred to as the {\em dimension} of the subspace $\sys^*$.
\end{definition}
\noindent
\begin{corollary}\label{uniqueantipode}
For every element $x$ of a two-dimensional subspace, there is  a unique element $x\pr$ to which it is orthogonal.
\end{corollary}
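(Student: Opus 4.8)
The plan is to treat existence and uniqueness separately, both resting on Proposition \ref{anyframespans}, which in a two-dimensional subspace $\sys^*$ forces every frame to contain exactly two elements. For existence I would examine the singleton $\{x\}$, which is (trivially) a mutually orthogonal subset of $\sys^*$. Were it maximal it would be a frame, and hence by Proposition \ref{anyframespans} would have two elements --- impossible for a one-element set. So $\{x\}$ is not maximal, and there is some $x\pr \in \sys^*$ with $x \perp x\pr$. The enlarged set $\{x,x\pr\}$ is then a two-element mutually orthogonal subset of $\sys^*$; since no mutually orthogonal subset of $\sys^*$ can exceed two elements, it is itself maximal, i.e.\ a frame in $\sys^*$.

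For uniqueness, suppose $x\doubleprime \in \sys^*$ likewise satisfies $x \perp x\doubleprime$. By Proposition \ref{anyframespans} the frame $\{x,x\pr\}$ spans $\sys^*$, so that (\ref{subspace}) applied to $x\doubleprime \in \sys^*$ reads
\begin{equation}
x(x\doubleprime) + x\pr(x\doubleprime) = 1.
\end{equation}
The symmetry in (\ref{basic}) together with $x \perp x\doubleprime$ gives $x(x\doubleprime) = x\doubleprime(x) = 0$, whence $x\pr(x\doubleprime) = 1$. The equality clause of (\ref{basic}) then yields $x\pr = x\doubleprime$, so the orthogonal partner is unique.

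I expect the delicate point to lie wholly in existence, and in two readings that must be kept straight. First, the conclusion that $\{x\}$ is non-maximal is purely a counting argument resting on Proposition \ref{anyframespans}, which in turn depends on Axioms I--III and on the finiteness of frames supplied by the compactness of Axiom II. Second, \emph{maximal set of mutually orthogonal elements} must be understood relative to $\sys^*$, so that the partner $x\pr$ is guaranteed to lie in the subspace and not merely somewhere in $\sys$; this is precisely what makes both the basis relation (\ref{subspace}) and the final uniqueness step legitimate. Uniqueness itself, by contrast, is an immediate consequence of the basis property and Axiom I.
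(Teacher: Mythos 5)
Your proof is correct, and its uniqueness half is, almost verbatim, the paper's \emph{entire} proof: the paper argues only that if $x \perp x\pr$ and $x \perp x\doubleprime$ then $x(x\doubleprime) + x\pr(x\doubleprime) = 1$ forces $x\pr(x\doubleprime) = 1$, whence $x\pr = x\doubleprime$ by (\ref{basic}) --- exactly your second step, resting on Proposition \ref{anyframespans} (the frame $\{x,x\pr\}$ spans $\sys^*$) and the equality clause of Axiom I. Where you genuinely differ is that you also prove existence, which the paper leaves tacit: you observe that $\{x\}$, being a mutually orthogonal set, cannot be maximal in $\sys^*$ (a one-element frame would contradict the uniform cardinality $N=2$ of Proposition \ref{anyframespans}), so it enlarges to some $x\pr \in \sys^*$ with $x \perp x\pr$, and the enlarged pair is itself a frame. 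Your instinct about the delicate points is also right on both counts: the enlargement argument needs maximality understood \emph{relative to} $\sys^*$ (so that $x\pr$ lands in the subspace, making (\ref{subspace}) applicable), and it needs the extension of an orthogonal set to a maximal one to terminate, which the finiteness of frames (Axiom II) supplies. The only step you pass over lightly is the claim that no mutually orthogonal subset of $\sys^*$ exceeds two elements; to justify it one should extend such a set to a frame in $\sys^*$ and invoke Proposition \ref{anyframespans} again, but this is the same mechanism you already use and is harmless. In short: the paper buys brevity by treating existence as evident from the definitions of frame and dimension; your version buys completeness by making that extension argument, and its dependencies, explicit.
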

\begin{proof}
If $x \perp x\pr$ and $x \perp x''$, then $x(x'') + {x\pr}(x'') = 1 \implies {x\pr}(x'') = 1$, i.e.\ $x\pr = x''$ by (\ref{basic}).
\end{proof}
\noindent
We refer to $x$ and $x\pr$ as {\em antipodes} and indicate the two dimensional space that they span by
 ${\cal P}_{xx\pr}$.
\vsk
\section{Effect of measurements on states.}
The frame function property of the elements $a \in \sys$ means that the probability  $b_j(a)$ sums to unity for a particle in a beam prepared by an $a$-filter to pass one of the filters $b_1,b_2,\cdots$ of a frame $F_{\bf b}$. 
The fact that $b(b) = 1$ for any $b$  means that a beam which is unattenuated by a $b$ filter will be unattenuated by a second $b$ filter.  Thus a determination of which  filter of frame $F_{\bf b}$  a system in state $a$ passes transforms
the probability distribution function  $a$ into the probability distribution function
 $F_{\bf b} a$ given by:
\begin{equation}\label{mixedstate}
a \to F_{\bf b} a = \sum_j b_j(a){b_j}.
\end{equation}
We shall refer to this transformation as a {\em measurement} of $a$ by $F_{\bf b}$.
It is the transformation  referred to in quantum mechanics as ``collapse of the wave function".
\vsk
In view of Axiom III,  $F_{\bf b} a$ is a frame function, since
 if $\rho_1,\rho_2,\cdots $ are frame functions so also is any convex linear combination of them i.e.\  any function  $\zeta = \sum_j \alpha_j \rho_j$ with $0 \leq \alpha_j \leq 1$ and $\sum_j \alpha_j = 1$. Frame functions such as $F_{\bf b} a$ that result  from measurements are of a special type
in that they are composed of {\em orthogonal} pure states. We refer to these as
 {\em mixed states}. The term ``state" will be used to refer to either a pure state or a mixed state frame function. 
 \vsk
  Note:  {\em It is important to keep in mind that we cannot assume at this stage  that frame functions composed  of convex combinations of  non-orthogonal pure states can be written as convex combinations of orthogonal pure states  --- something  we know to be a consequence in the CRQM  of the diagonalizability of convex combinations of  projection operators representing pure states. }
  \vsk
 The mixed state  $ \rho =\sum_j \alpha_j b_j$ 
 has the property that it is unchanged by a measurement of $F_{\bf b}$, i.e.\
 \begin{equation}
 F_{\bf b}\rho = F_{\bf b}\sum_j \alpha_j b_j = \sum_j \alpha_j  F_{\bf b}b_j = \sum_j \alpha_j b_j =\rho
 \end{equation}
 but will change for other choices of the frame.
 \vsk
 The Shannon entropy
 of a probability distribution $(q_1,q_2,\cdots)$
 \begin{equation}
 S[(q_1,q_2,\cdots)] = - \sum_j q_j \ln q_j.
 \end{equation}
  measures the removal of uncertainty when an outcome is known.
 As noted above,  Stotland et al  \cite{SPBC} observed that if we do not know in advance what measurement is going to be made on a state $\rho$, then to compute the  removal of uncertainty when a measurement is made on it
 we should take the average of the Shannon entropy of the outcome distributions  from all possible meaurements.
 This average is called the ``information entropy"  $\overline{S[\rho]}$. In contrast, the von Neumann entropy $H[\rho]$ considers only the
 distribution of outcomes when the measurement does not change the state, i.e.\  
 \begin{equation}
 H[\sum_j \alpha_j b_j] = - \sum_j \alpha_j \ln \alpha_j.
 \end{equation}
\vsk
Both the von Neumann entropy and the information entropy provide a measure of the purity of a mixed state.  Let us
compare them for states belonging to the skeletal system $\sys_o$ given by the $p$-table  (\ref{toymodel}):
\vsk
In $\sys_o$ there are just two frames $F_{\bf x}$ and $F_{\bf y}$. We have $F_{\bf x} x = x$ and $F_{\bf y} y = y$. There  appear to be
two mixed states $F_{\bf x} y = \half x + \half {x\pr}$ and $ F_{\bf y} x = \half y + \half {y\pr}$, but
these are in fact identical. To see this observe that 
$\half x(z) + \half {x\pr}(z) = \half =  \half (z(x) + z(x\pr)) $ for every $z$ and the same with
$y$ replacing $x$. The von Neumann entropy of the pure states is zero and of the mixed state is $\ln 2$.  The
information entropy is $\half \ln 2$ for the pure states and  $\ln 2$  for the mixed state.
\vsk
{\em In $\sys_o$ we see a rudimentary form of interference i.e.\  it can happen that two states formed in different ways (in this case the states $F_{\bf x} y$ and $F_{\bf y} x$)  turn out to be identical.} More generally it follows  from (\ref{allframesequivalent})  that if $F_{\bf x}^*$ and $F_{\bf y}^*$ are frames in the same $N$ dimensional subspace, then the maximally mixed states in that subspace are identical, i.e.\
\begin{equation}\label{maximallymixed}
  {1\over N}\sum_{j=1}^N {x_j} =  {1\over N}\sum_{j=1}^N {y_j}.
  \end{equation}
\vsk
Although $\sys_o$ exhibits this indistinguishability of maximally mixed states, {\em it is not a quantum mechanical system}. This
 may be seen by observing that
its statistical behavior can be produced by the following hidden variable model: Let $\xi$ be a hidden
variable that takes on values on a circle.  When $\xi$ is on the right and left halves of the circle, a particle passes filters $x,x\pr$  respectively,  and when $\xi$ is on the upper and lower halves of the circle, it passes  $y,y\pr$  respectively. The conditional probability
that $\xi$ be  on the right half, given that it is in the upper half is $1/2$. Similarly all of the values  in the table are predicted correctly.
\vsk
The $\sys_o$ model can be generalized to one with  sets of $2(R + 1)$ elements $x_1,x_2,\cdots,x_{R+1}$ and $x_1\pr,x_2\pr,\cdots,x_{R+1}\pr$
 such that $x_j(x_k) = \half$ and $x_j(x_k\pr) = \half$ for $j \neq k$.
 Just as $\sys_o$  which has $R = 1$ is the skeleton (\ref{CRQMreal}) of two dimensional quantum mechanics over the real number field, the  case $R = 2$ is a skeleton of two-dimensional complex quantum mechanics
with  $x_1,x_2,x_3$ corresponding to $(1,0),2^{-1/2}(1,1),2^{-1/2}(1,i)$ respectively and $x_1\pr,x_2\pr,x_3\pr$ corresponding to
$(0,1),2^{-1/2}(1,-1),2^{-1/2}(1,-i)$ respectively.
The case $R = 4$ is a skeleton of quaternionic
 quantum mechanics with  additional elements 
 $x_4,x_5$ and $x_4\pr,x_5\pr$  corresponding to  $2^{-1/2}(1,\pm j)$ and $2^{-1/2}(1,\pm k)$ where $1,i,j,k$ are the quaternionic units.
  None of these skeletons are quantum mechanical since, as for $\sys_o$,  the function $x(y)$ can be reproduced
by a hidden-variable  $\xi$ ranging over an $R$-sphere: If the
 cartesian coordinates of  $\xi$ are $(\xi_1,\cdots,\xi_{R+1})$, a particle passes  an $x_j$ or $x\pr_j$ filter according as  $\xi_j$ is positive or negative. 
 \vsk
 \section{The entropic Turing-von Neumann effect}
We see then that there is a crucial ingredient of quantum mechanics not accounted for in  $\sys_o$ or any of these generalizations. 
A clue to the missing ingredient comes from a comparison of the information entropy of a pure state in the three skeletal
models with that of real, complex, and quaternionic quantum mechanics given in the following table:
\vsk
\centerline{Information entropy for $N=2$ pure states}
\begin{equation}\label{compareinfoentropy}
\begin{matrix}
 &\rfield&\cfield&\qfield\\
\text{skeleton}&\half \ln 2&{2\over 3}\ln 2&{4\over 5}\ln 2\\
\text{qm}&2\ln2 - 1&1/2&7/12\\
\text{qm - skeleton}&0.040&0.037&0.029\\
\end{matrix}
\end{equation}
We see that the quantum mechanical values are slightly larger than those of the skeletons, which
tells us that there have to be more possible measurement frames to account for it. 
The effect of a measurement described by the frames $F_{\bf x}$ and $F_{\bf y}$ in $\sys_o$ is either to leave 
a pure state alone or  transform it into the maximally mixed state. Thus the von Neumann entropy change is either zero or $\ln 2$ in any measurement. {\em  In quantum mechanical systems, however, we observe that the increase in von Neumann entropy resulting from a measurement can be reduced by making a suitable intermediate measurement. } This phenomenon, which was first noted by
%:vn
Turing\cite{TUR} and elaborated by von Neumann\cite{TvN} is related to the so-called Zeno effect in which the
dynamic evolution of a state is arrested by continuous monitoring.
\vsk
In order for the increase in von Neumann entropy resulting from the measurement of $F_{\bf b}$ on $a$ to
be reduced by some intermediate meaasurement, there must be a frame $F_{\bf c}$ such that 
\begin{equation}\label{tvn0}
H(F_{\bf b}F_{\bf c} a) < H(F_{\bf b} a).
\end{equation}
For $F_{\bf c}$ to accomplish this, it must transform $a$ into a state which in some sense lies ``between" $a$ and $b$. A natural
candidate for such a state is one of the form $\rho = \alpha a + (1 - \alpha) b$ with $0 < \alpha < 1$.  Although this is not a combination of
orthogonal pure states, we have seen above that {\em it is possible for two frame functions formed in different ways to be identical}, and hence  some measurement of $F_{\bf c}$ on $a$ can in principle produce a frame function identical to $\rho$.
 The resulting entropy will be smallest if the combination involves
just a single pair of orthogonal states $c,{c\pr}$. The following axiom asserts that such a measurement does indeed exist,
and, as we shall see, this will 
produce the entropic Turing-von Neumann effect. {\em It is the key property distinguishing quantum mechanical from classical systems.}
\vsk
{\bf Axiom IV: } 
{\em Given any two distinct elements  $a,b \in \sys$ there exists a pair of orthogonal elements $c,c\pr$ and a number $0 < \alpha < 1$ such that}
\begin{equation}\label{tvnproperty}
a(c) c + a(c\pr){c\pr} = \alpha a + (1 - \alpha)b  .
\end{equation} 
\noindent
The following are consequences of Axiom IV and Axiom I:
\vsk
\begin{proposition}\label{consequences}
\begin{equation}\label{consone}
 a(c) + a(c\pr) = 1.
\end{equation}
\begin{equation}\label{constwo}
 b(c) = a(c), \;\; b(c\pr) = a(c\pr).
 \end{equation}
 \begin{equation}\label{consthree}
 \alpha = \half.
 \end{equation}
 \begin{equation}\label{consfour}
 a(c) = \half(1 \pm \sqrt{a(b)}).
 \end{equation}\end{proposition}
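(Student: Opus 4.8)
The plan is to treat the identity (\ref{tvnproperty}) of Axiom IV not as a single scalar equation but as an equality of frame functions, and then to read off the four consequences by evaluating it at well-chosen elements $z\in\sys$ and by summing it over a frame. The only tools required are the normalization and symmetry of Axiom I (\ref{basic}) --- namely $c(c)=1$, $c(c\pr)=0$, and $w(z)=z(w)$ --- together with the frame-function property of pure states from Axiom III. Written as a functional identity, (\ref{tvnproperty}) reads $a(c)\,c(z)+a(c\pr)\,c\pr(z)=\alpha\,a(z)+(1-\alpha)\,b(z)$ for every $z$, and each consequence comes from a different choice of $z$.

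First I would establish (\ref{constwo}). Evaluating at $z=c$ collapses the left side to $a(c)$, since $c(c)=1$ and $c\pr(c)=0$, leaving $a(c)=\alpha\,a(c)+(1-\alpha)\,b(c)$. Because $0<\alpha<1$ the factor $(1-\alpha)$ may be cancelled, giving $b(c)=a(c)$; evaluating instead at $z=c\pr$ gives $b(c\pr)=a(c\pr)$. Next, for (\ref{consone}), I would sum the identity over an arbitrary frame $F_{\bf d}=\{d_1,d_2,\dots\}$, which is finite by the earlier proposition. Since each of $c,c\pr,a,b$ is a pure state, Axiom III makes each of $\sum_j c(d_j)$, $\sum_j c\pr(d_j)$, $\sum_j a(d_j)$, $\sum_j b(d_j)$ equal to unity, so the relation reduces at once to $a(c)+a(c\pr)=\alpha+(1-\alpha)=1$.

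For (\ref{consthree}) I would evaluate the identity at the two remaining natural points $z=a$ and $z=b$. Using the symmetry $c(a)=a(c)$, $c\pr(a)=a(c\pr)$, and the just-proved (\ref{constwo}) (so that $c(b)=b(c)=a(c)$ and $c\pr(b)=a(c\pr)$), both evaluations produce the \emph{same} left-hand quantity $a(c)^2+a(c\pr)^2$, while their right-hand sides are $\alpha+(1-\alpha)\,a(b)$ and $\alpha\,a(b)+(1-\alpha)$ respectively. Equating them collapses to $(2\alpha-1)\bigl(1-a(b)\bigr)=0$. Here the hypothesis that $a$ and $b$ are distinct is precisely what is needed: by Axiom I (\ref{basic}) one has $a(b)=1$ only when $a=b$, so $1-a(b)\neq 0$ and therefore $\alpha=\half$.

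Finally, (\ref{consfour}) follows by substitution. Putting $\alpha=\half$ and $a(c\pr)=1-a(c)$ (from (\ref{consone})) into the $z=a$ evaluation turns it into the quadratic $2a(c)^2-2a(c)+1=\half\bigl(1+a(b)\bigr)$, which rearranges to $(2a(c)-1)^2=a(b)$, i.e.\ $a(c)=\half(1\pm\sqrt{a(b)})$. I expect no serious obstacle in this chain; the one step demanding a genuine idea rather than routine substitution is (\ref{consone}), where one must recognize that \emph{summing} the functional identity over a frame --- rather than evaluating it pointwise --- is exactly what converts the frame-function property into the normalization $a(c)+a(c\pr)=1$.
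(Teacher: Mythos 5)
Your proof is correct and takes essentially the same route as the paper's: its one-line observation that the left side of (\ref{tvnproperty}) must be a frame function because the right side is amounts exactly to your step of summing the identity over an arbitrary frame, and your evaluations at $z=c,c\pr,a,b$ with the cancellations from $\alpha\neq 1$ and $a(b)\neq 1$ (for $a\neq b$) reproduce the paper's substitutions verbatim. The only difference is the order --- the paper establishes (\ref{consone}) before (\ref{constwo}) --- which is immaterial since the two steps are logically independent.
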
 
 \begin{proof}
 The left side of (\ref{tvnproperty}) is a frame function since the  right side is.
Hence equation (\ref{consone}) holds.  Substitution of $z = c$  in the arguments
of the functions $c,c\pr,a,b$ in (\ref{tvnproperty}) gives
\begin{equation}
a(c) = \alpha\, a(c) + (1 - \alpha)b(c),
\end{equation}
whence $a(c) = b(c)$  since $\alpha \neq 1$. Similarly substitution
of $z = c\pr$ gives $a(c\pr) = b(c\pr)$ proving (\ref{constwo}).
Substituting $z = a$ and $z = b$ in the arguments and
using (\ref{consone},\ref{constwo}) we have
\begin{equation}
a(c)^2 + (1 - a(c))^2 = \alpha  + (1-\alpha)b(a),
\end{equation}
$$
a(c)^2 + (1 - a(c))^2 = \alpha\, a(b) + (1 - \alpha,
$$
whence, since $a(b) \neq 1$ for $a \neq b$, we obtain (\ref{consthree})
and (\ref{consfour}).
\end{proof}
\vsk
 Since the interchange of $c$ and $c\pr$ changes $a(c)$ to $a(c\pr) = 1 - a(c)$ we can choose $c$ and $c\pr$
such that $a(c) \geq \half$, i.e.\  assume the solution with the $+$ in (\ref{consfour}).
\vsk
Hence we have:
\vsk
{\bf Equivalent form for Axiom IV}  
\vsk
{\em Given $a \neq b$ there exist orthogonal states $c,c\pr$ 
and a number $\lambda$ such that  
\begin{equation}\label{tvnproperty1}\half(a + b) = \lambda c + (1 - \lambda)c\pr\;\text{ with }
\half \leq \lambda \leq 1 .\hskip.5in 
\end{equation}}
It will then follow from the substitutions used above to obtain (\ref{consone} - \ref{consfour}) that
\begin{equation}\label{lambdaeq}
 \lambda = a(c) = b(c) =   \half(1 + \sqrt{a(b)}).
 \end{equation}
 \vsk
 {\em  Note that (\ref{tvnproperty1}) is a second example of an interference phenomenon in which two frame functions formed in different ways turn out to be identical.} What it says is that every mixture of two orthogonal states is indistinguishable from some equal mixture of
 two states. (Its validity in the CRQM follows from the diagonalizability of density matrices.)
\vsk
  We shall often use the special form taken by (\ref{tvnproperty1}) when the argument of the functions $a,b,c,c\pr$ is a member of the subspace $ {\cal P}_{cc\pr}$ spanned by $c,c\pr$ so that ${c\pr}(z) = 1 - {c}(z)$, namely:
\begin{equation}\label{tvnproperty2}
 \half(a(z) + b(z)) = (2 \lambda - 1) c(z) + (1 - \lambda), \;\; z \in {\cal P}_{cc\pr}.
 \end{equation}
 \noindent
  From (\ref{consone},\ref{constwo}) this applies to  $z = a,b$ and of course to $z = c,c\pr$.
 \vsk
We next derive a number of propositions that follow from Axioms I-IV that will play a role in the subsequent analysis:
\vsk
 \begin{proposition}\label{cunique}
 If $a$ and $b$ are  not orthogonal the solution $c,c\pr$ of  (\ref{tvnproperty1}) is unique.
 \end{proposition}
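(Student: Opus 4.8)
The plan is to show that any two solutions of (\ref{tvnproperty1}) must coincide, exploiting the fact that non-orthogonality forces the two mixing weights to be unequal. First I would note that by (\ref{lambdaeq}) the weight $\lambda$ is the same for every solution, namely $\lambda = \half(1 + \sqrt{a(b)})$, and that the hypothesis $a \not\perp b$ gives $a(b) > 0$ and hence the strict inequality $\half < \lambda < 1$ (the upper bound because $a(b) < 1$ for $a \neq b$). This strict separation of the weights is what the proof will ultimately turn on. So let $(c,{c\pr})$ and $(d,{d\pr})$ be two solutions; both represent the same frame function $\rho = \half(a+b)$, giving $\lambda c + (1-\lambda){c\pr} = \lambda d + (1-\lambda){d\pr}$ as functions on $\sys$.

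The main step --- and the part I expect to be the real obstacle --- is to show that $c$ lies in the two-dimensional subspace $\twodim{d}$ spanned by $\{d,{d\pr}\}$, i.e.\ that $d(c) + {d\pr}(c) = 1$. For this I would complete the frame $\{d,{d\pr}\}$ of $\twodim{d}$ to a frame $\{d,{d\pr},e_3,e_4,\dots\}$ of $\sys$ (finite by Axiom II) and use that $\rho$ is a frame function. Since the representation $\rho = \lambda d + (1-\lambda){d\pr}$ already gives $\rho(d) + \rho({d\pr}) = \lambda + (1-\lambda) = 1$, which exhausts the total, every remaining $\rho(e_k)$ must vanish; writing instead $\rho = \lambda c + (1-\lambda){c\pr}$ and using that $c,{c\pr}$ are non-negative while $\lambda,1-\lambda>0$ forces $c(e_k)=0$ for all $k\ge 3$. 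Feeding this back into the basis identity (\ref{frameisbasis}) for the argument $c$ then yields $d(c) + {d\pr}(c) = 1$, i.e.\ $c \in \twodim{d}$ in the sense of (\ref{subspace}).

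With coplanarity in hand the conclusion is immediate. Evaluating the function identity $\lambda c + (1-\lambda){c\pr} = \lambda d + (1-\lambda){d\pr}$ at the argument $z=c$ gives $\lambda = \lambda\,d(c) + (1-\lambda){d\pr}(c)$, and substituting ${d\pr}(c) = 1 - d(c)$ reduces this to $(2\lambda-1)\,d(c) = 2\lambda - 1$. Here the strict inequality $\lambda > \half$ is essential: it lets me cancel the factor $2\lambda-1 \neq 0$ to conclude $d(c)=1$, whence $c = d$ by Axiom I, equation (\ref{basic}). The remaining identity then collapses to $(1-\lambda)({c\pr}-{d\pr}) = 0$, so ${c\pr} = {d\pr}$, establishing uniqueness. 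I would close by remarking that if instead $a \perp b$ then $\lambda = \half$, the two weights coincide, and the cancellation step fails --- correctly reflecting that in that degenerate case the representation need no longer be unique.
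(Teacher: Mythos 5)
Your proof is correct, and its endgame is the same as the paper's: fix the common weight $\lambda = \half(1+\sqrt{a(b)})$ via (\ref{lambdaeq}), evaluate the identity of frame functions at a solution element, and cancel the nonzero factor $2\lambda - 1$. Where you differ is precisely the step you flagged as the obstacle. The paper's proof is one line: it applies (\ref{tvnproperty2}) with $z = g$ (your $d$) to get $(2\lambda-1)c(g) = (2\lambda-1)g(g)$ and concludes $c(g) = 1$. But (\ref{tvnproperty2}) is stated only for $z \in {\cal P}_{cc\pr}$, so the paper tacitly assumes the second solution is coplanar with the first --- exactly the fact you establish explicitly by completing $\{d,d\pr\}$ to a finite frame of $\sys$ (Axiom II), using non-negativity to force $\rho(e_k) = 0$ and hence $c(e_k) = 0$ for the extra frame elements, and then invoking (\ref{frameisbasis}) to get $d(c) + d\pr(c) = 1$. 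So your version supplies a small but genuine missing justification in the paper's terse argument. It is worth noting an even shorter patch that avoids frame completion entirely: evaluating the unrestricted identity (\ref{tvnproperty1}) at $z = c$ gives $\lambda = \lambda\, d(c) + (1-\lambda)\,d\pr(c)$, and since $d(c) + d\pr(c) \leq 1$ always holds (the orthogonal pair $d,d\pr$ extends to some frame), one gets $\lambda \leq (2\lambda-1)\,d(c) + 1 - \lambda$, i.e.\ $d(c) \geq 1$, so $d(c) = 1$ directly, with no coplanarity needed. Your bookkeeping at the end is also sound: $\lambda < 1$ (from $a \neq b$ and Axiom I) is what lets you conclude $c\pr = d\pr$ from the residual identity, and your closing observation that uniqueness genuinely fails when $a \perp b$ is correct --- in that case $\half(a+b)$ is the maximally mixed state of ${\cal P}_{ab}$, which by (\ref{maximallymixed}) admits the representation $\half c + \half c\pr$ for every frame of that subspace.
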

 \begin{proof}
 If $a,b$ are not orthogonal,  then $\lambda \neq \half$. 
If $g,g\pr$ are also solutions of (\ref{tvnproperty1}) it follows from 
 (\ref{tvnproperty2}) with  $z = g$ that
$ (2 \lambda - 1) c(g) =  (2 \lambda - 1) g(g) $ from which $c(g) = 1$ and hence $g = c$ .
\end{proof}
\vsk
{\bf Note}:  When we wish to emphasize that $c,{c\pr}$ and $\lambda$  in (\ref{tvnproperty1}) are uniquely determined by a non-orthogonal pair $a,b$, we will write
$c = c(a,b), c\pr = c\pr(a,b)$ and $\lambda = \lambda(a,b)$.
  \begin{proposition}\label{uniquesubspace}
 A pair of distinct states $a$, $b$ belong to a unique two dimensional subspace ${\cal P}_{ab}$.  
 \end{proposition}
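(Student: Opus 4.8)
The plan is to establish existence and uniqueness separately, and in each to split off the degenerate case $a\perp b$ from the generic case $a \ncl b$.

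For existence I would argue as follows. If $a\perp b$, the pair $\{a,b\}$ is itself orthogonal and the subspace ${\cal P}_{ab}$ it spans contains both by the defining condition (\ref{subspace}). If instead $a \ncl b$, I would read off the required subspace directly from Axiom IV as already unpacked in Proposition \ref{consequences}: there is an orthogonal pair $c,c\pr$ with, by (\ref{consone}), $a(c)+a(c\pr)=1$, so $a\in{\cal P}_{cc\pr}$, while (\ref{constwo}) gives $b(c)=a(c)$ and $b(c\pr)=a(c\pr)$, whence $b(c)+b(c\pr)=1$ and $b\in{\cal P}_{cc\pr}$ as well. Thus ${\cal P}_{cc\pr}$ is a two-dimensional subspace containing both.

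For uniqueness, suppose $a,b$ also lie in a two-dimensional subspace $P={\cal P}_{dd\pr}$. When $a\perp b$, the elements $a,b$ form a mutually orthogonal pair in $P$; since every frame of $P$ has exactly two elements (Proposition \ref{anyframespans}), no further element of $P$ can be orthogonal to both, so $\{a,b\}$ is a frame of $P$, and Proposition \ref{anyframespans} then identifies $P$ with the set of $z$ satisfying $a(z)+b(z)=1$, i.e.\ $P={\cal P}_{ab}$. The real work is the case $a\ncl b$, where I aim to show that the pair $c=c(a,b),\,c\pr=c\pr(a,b)$ uniquely determined in Proposition \ref{cunique} must lie inside $P$. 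The plan is to evaluate the scalar identity (\ref{tvnproperty}) — rewritten as $\lambda c(z)+(1-\lambda)c\pr(z)=\half(a(z)+b(z))$ using $a(c)=\lambda$ and $a(c\pr)=1-\lambda$ from (\ref{lambdaeq}) — at the two frame elements $z=d$ and $z=d\pr$ and add. Because $a,b\in P$, symmetry (\ref{basic}) together with (\ref{subspace}) collapses the right-hand sum to $1$, leaving $\lambda\,[d(c)+d\pr(c)]+(1-\lambda)\,[d(c\pr)+d\pr(c\pr)]=1$. Extending $\{d,d\pr\}$ to a frame of $\sys$ and applying (\ref{frameisbasis}) with nonnegativity of frame functions gives $d(c)+d\pr(c)\le 1$ and $d(c\pr)+d\pr(c\pr)\le 1$; since $0<\lambda<1$, a convex combination of two numbers each at most $1$ can equal $1$ only if both equal $1$. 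Hence $c,c\pr\in P$ by (\ref{subspace}), so $\{c,c\pr\}$ is an orthogonal pair in $P$, therefore a frame of $P$, and $P={\cal P}_{cc\pr}$; as $c,c\pr$ are fixed by $a,b$ alone, every such $P$ coincides.

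The step I expect to be the main obstacle is precisely placing $c,c\pr$ inside $P$. The temptation is to argue through some linear or convex structure on subspaces, but no such structure is yet available at this stage; everything must instead be squeezed out of evaluating the single identity (\ref{tvnproperty}) at $d$ and $d\pr$ and exploiting the partial-frame-sum inequality. The orthogonal subcase and the whole existence direction are routine by comparison.
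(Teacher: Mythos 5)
Your proof is correct and takes essentially the same route as the paper: the paper likewise evaluates the Axiom IV identity (\ref{tvnproperty1}) at the two frame elements of the putative second subspace, sums the results, and uses the fact that a convex combination with $0<\lambda<1$ of two partial frame sums, each at most unity, can equal unity only if both equal unity, then concludes via Proposition \ref{anyframespans}. Your explicit split into the cases $a\perp b$ and $a \ncl b$ is only a cosmetic difference, since the paper's single computation covers both: it needs only \emph{some} solution $c,c\pr$ of (\ref{tvnproperty1}) (with $\lambda=\half$ when $a\perp b$), not the uniqueness from Proposition \ref{cunique}.
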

 \begin{proof}
 It follows from (\ref{consone},\ref{constwo}) that $a,b$ belong to ${\cal P}_{cc\pr}$
 Suppose $a$ and $b$ belong to another two dimensional subspace ${\cal P}_{gg\pr}$.
 Add the two equations obtained from (\ref{tvnproperty1}) by setting $z = g$ and $z = g\pr$.
 Since $a,b \in {\cal P}_{gg\pr}$, the left sides will sum to unity, and we obtain
 \begin{equation}
1 = \lambda(c(g) + {c\pr}(g)) + (1- \lambda)({c}(g\pr) + {c\pr}(g\pr)).
\end{equation}
Since there is some maximal set of mutually orthogonal elements containing  $c,c\pr$, the coefficients of $\lambda$ and $(1 - \lambda)$ lie between
zero and unity as does $\lambda$.  Hence the only solution is for both coefficients to be unity. Hence $g,g\pr$ must lie
in the subspace ${\cal P}_{cc\pr}$, and hence by proposition {\bf{\ref{anyframespans}}} the subspaces ${\cal P}_{gg\pr}$ and
${\cal P}_{cc\pr}$ are identical. 
\end{proof}
\vsk
\begin{corollary}\label{onepointincommon}
Two two-dimensional subspaces are either disjoint, identical, or have just one point in common.
\end{corollary}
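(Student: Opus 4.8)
The plan is to reduce everything to Proposition \ref{uniquesubspace}, which already states that any pair of distinct states lies in a \emph{unique} two-dimensional subspace. The corollary is then purely a counting statement about how many points two such subspaces can share, so I would organize the argument as a trichotomy on the cardinality of the intersection.

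First I would let ${\cal P}$ and ${\cal Q}$ be two two-dimensional subspaces and regard their intersection as the set of elements (``points'') common to both. If that intersection is empty, the subspaces are disjoint, and if it consists of exactly one element there is nothing further to prove. Thus the only case requiring an argument is the one in which the intersection contains at least two distinct points $a$ and $b$.

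In that case both ${\cal P}$ and ${\cal Q}$ are two-dimensional subspaces containing the distinct pair $a,b$. By Proposition \ref{uniquesubspace} the two-dimensional subspace determined by $a$ and $b$ is unique, namely ${\cal P}_{ab}$. Hence ${\cal P} = {\cal P}_{ab} = {\cal Q}$, so the two subspaces coincide. This shows that the moment two distinct common points exist the subspaces must be identical, which together with the two trivial cases exhausts the possibilities: disjoint, identical, or exactly one common point.

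I do not anticipate a genuine obstacle, since the entire content is already packaged in Proposition \ref{uniquesubspace}. The only point demanding care is verifying that the three alternatives are exhaustive and mutually exclusive --- in particular, confirming that ``two or more common points'' forces outright identity rather than merely permitting some larger shared overlap. This is exactly the contingency excluded by the uniqueness clause of Proposition \ref{uniquesubspace}, so the verification amounts to invoking that proposition rather than to any new computation.
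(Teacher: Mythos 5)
Your proof is correct and is precisely the argument the paper intends: the corollary is stated there without a printed proof, as an immediate consequence of Proposition \ref{uniquesubspace}, and your observation that two distinct common points $a,b$ force both subspaces to coincide with the unique ${\cal P}_{ab}$ is exactly that consequence made explicit. No difference in approach, and no gap.
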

\begin{proposition}\label{anyorthogonal}
If $x$ is orthogonal to two distinct states $a,b$, it is orthogonal to every member of ${\cal P}_{ab}$.
\end{proposition}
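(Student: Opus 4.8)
The plan is to reduce the claim to Proposition~\ref{anyorthog}: if I can exhibit a frame of the two–dimensional subspace ${\cal P}_{ab}$ to which $x$ is orthogonal, then $x$ is automatically orthogonal to every element of ${\cal P}_{ab}$. The subspace ${\cal P}_{ab}$ exists and is unique by Proposition~\ref{uniquesubspace}, so the entire problem is to manufacture an \emph{orthogonal} pair spanning ${\cal P}_{ab}$ that $x$ annihilates. The subtlety is that the given pair $a,b$ need not itself be a frame, since $a$ and $b$ may fail to be orthogonal.

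I would split into two cases. If $a \perp b$, then $\{a,b\}$ is already a frame of ${\cal P}_{ab}$, and since we are given $x \perp a$ and $x \perp b$, the hypothesis of Proposition~\ref{anyorthog} is met at once. The substantive case is $a \ncl b$, i.e.\ $0 < a(b) < 1$. Here I invoke the equivalent form of Axiom~IV, equation~(\ref{tvnproperty1}): there is an orthogonal pair $c,c\pr$ and a number $\lambda$ with $\half(a+b) = \lambda c + (1-\lambda)c\pr$, and by Proposition~\ref{uniquesubspace} this pair spans ${\cal P}_{ab}$. Evaluating this identity of frame functions at the argument $z = x$ gives $\half(a(x)+b(x)) = \lambda\, c(x) + (1-\lambda)\,c\pr(x)$.

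The hypotheses $x \perp a$ and $x \perp b$, together with the symmetry~(\ref{basic}), force $a(x) = b(x) = 0$, so the left side vanishes. Now $c(x) = x(c)$ and $c\pr(x) = x(c\pr)$ are values of the pure state $x(\;)$ and hence nonnegative, while from~(\ref{lambdaeq}) we have $\lambda = \half(1 + \sqrt{a(b)})$ with $0 < a(b) < 1$, so that $\half < \lambda < 1$ and both coefficients are \emph{strictly positive}. A sum of two nonnegative terms with strictly positive coefficients can vanish only if each term vanishes, whence $c(x) = c\pr(x) = 0$, i.e.\ $x \perp c$ and $x \perp c\pr$. Applying Proposition~\ref{anyorthog} to the frame $\{c,c\pr\}$ of ${\cal P}_{ab}$ then finishes the argument.

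I expect the main obstacle to be conceptual rather than computational: recognizing that $a,b$ themselves may not form a frame, so that one must first convert them—via Axiom~IV—into an honest orthogonal frame $\{c,c\pr\}$ of ${\cal P}_{ab}$ before Proposition~\ref{anyorthog} becomes applicable. Once that move is made, the only quantitative point to watch is that $\lambda$ stays strictly between $0$ and $1$, guaranteed precisely by $a \neq b$ (giving $\lambda < 1$) and $a \not\perp b$ (giving $\lambda > \half$); this is exactly what licenses the positivity argument that kills both $c(x)$ and $c\pr(x)$.
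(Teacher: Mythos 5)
Your proposal is correct and follows essentially the same route as the paper: evaluate the Axiom~IV identity (\ref{tvnproperty1}) at $x$, use that both coefficients $\lambda$ and $1-\lambda$ are strictly positive (since $a \neq b$ forces $\lambda < 1$) to kill $c(x)$ and $c\pr(x)$, and then invoke Proposition~\ref{anyorthog} for the frame $\{c,c\pr\}$ of ${\cal P}_{ab}$. The only cosmetic difference is your case split at $a \perp b$, which the paper silently absorbs because Axiom~IV applies to any distinct pair, with $\lambda = \half$ there still making both coefficients positive.
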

\begin{proof}
From (\ref{tvnproperty1}) if $x \perp a$ and $x \perp b$, then $x \perp c$ and $x \perp {c\pr}$.  Hence
by proposition {\bf \ref{anyorthog}} $x$ is orthogonal to the subspace spanned by $c$ and ${c\pr}$ i.e.\
it is orthogonal to ${\cal P}_{ab}$.
\end{proof}
\begin{definition}
Two elements  $x,y$ are said to be ``mutually equatorial" if  $x(y) = \half$, and we write $x \smile y$. The set of elements
equatorial to a given element $z$ is referred to as the {\em equator opposite} $z$ and denoted ${\cal E}_z$.
\end{definition}
\begin{proposition}\label{equatorialtheorem}
If $a \ncl b$ belong to a two-dimensional subspace ${\cal P}$,  and $a,b \in {\cal E}_z$ for some $z \in {\cal P}$, then
$c(a,b) \in {\cal E}(z)$.
\end{proposition}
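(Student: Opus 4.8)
The plan is to evaluate the defining relation for $c(a,b)$ at the given point $z$ and solve for $c(z)$, hoping to obtain $c(z) = \half$. Before I can do this I must locate $z$ relative to $c,c\pr$. Since $a \ncl b$ both lie in the two-dimensional subspace ${\cal P}$, Proposition \ref{uniquesubspace} tells me that ${\cal P}$ is the \emph{unique} two-dimensional subspace through $a$ and $b$; by (\ref{consone},\ref{constwo}) that subspace is ${\cal P}_{cc\pr}$. Hence ${\cal P} = {\cal P}_{cc\pr}$, and in particular $z \in {\cal P}_{cc\pr}$, so that $c\pr(z) = 1 - c(z)$ and the reduced form (\ref{tvnproperty2}) of Axiom IV is available with argument $z$.

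With this in hand, the first substantive step is to insert the hypothesis $a,b \in {\cal E}_z$, that is $a(z) = b(z) = \half$, into (\ref{tvnproperty2}). The left-hand side then collapses to $\half$, leaving
$$\half = (2\lambda - 1)\, c(z) + (1 - \lambda),$$
where $\lambda = \lambda(a,b)$. Rearranging gives $(2\lambda - 1)\,c(z) = \lambda - \half = \half(2\lambda - 1)$. Because $a \ncl b$ are not orthogonal, equation (\ref{lambdaeq}) yields $\lambda = \half(1 + \sqrt{a(b)}) \neq \half$, so the factor $2\lambda - 1$ is nonzero and may be cancelled, producing $c(z) = \half$. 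This is exactly the assertion $c(a,b) \in {\cal E}_z$.

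The calculation itself is elementary and poses no real difficulty; the only point demanding care is the preliminary reduction, namely verifying that $z$ genuinely lies in ${\cal P}_{cc\pr}$ so that $c\pr(z) = 1 - c(z)$ and hence (\ref{tvnproperty2}) may be used at $z$. This is where I would invest the argument, and it rests squarely on the uniqueness of the two-dimensional subspace through $a$ and $b$ supplied by Proposition \ref{uniquesubspace}. Once that identification of subspaces is secured, the non-orthogonality of $a$ and $b$ (ensuring $\lambda \neq \half$) is the single further ingredient needed to divide out and conclude.
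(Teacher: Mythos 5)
Your proof is correct and follows essentially the same route as the paper's: substitute $a(z) = b(z) = \half$ into (\ref{tvnproperty2}) and cancel the nonzero factor $2\lambda - 1$, which is guaranteed by $a \ncl b$ via (\ref{lambdaeq}). Your preliminary verification that ${\cal P} = {\cal P}_{cc\pr}$ (via Proposition \ref{uniquesubspace}), so that (\ref{tvnproperty2}) is legitimately applicable at $z$, is a detail the paper leaves implicit, and making it explicit is a sound addition rather than a deviation.
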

\begin{proof}
 Solve (\ref{tvnproperty2}) for $c(z)$ with $a(z) = b(z) = \half$.
  Noting that $\lambda \neq \half$ for $a \ncl b$  , we obtain $c(z) = \half. $
\end{proof}
\begin{proposition}
Let $a$ and $b$ be distinct and non-orthogonal, and let
 $b\pr$ be the antipode of $b$ in ${\cal P}_{ab}$. Then
\begin{equation}\label{eiscabprime}
c(a,b) \smile c(a,b\pr).
\end{equation}
\end{proposition}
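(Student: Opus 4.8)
The plan is to reduce the equatorial relation $c(a,b) \smile c(a,b')$ to the single arithmetic identity $a(c(a,b')) + b(c(a,b')) = 1$, and then to verify that identity from (\ref{lambdaeq}) together with the frame relation inside $\mathcal{P}_{ab}$. Write $c_1 = c(a,b)$ and $c_2 = c(a,b')$. First I would check that $c_2$ is legitimately defined and lives in the same subspace. Since $\{b,b'\}$ is a frame on $\mathcal{P}_{ab}$ and $a \in \mathcal{P}_{ab}$, we have $a(b') = 1 - a(b) \in (0,1)$; hence $a \ncl b'$ and $a \neq b'$, so Axiom IV applies and, by Proposition \ref{cunique}, $c_2$ is the unique solution of (\ref{tvnproperty1}) for the pair $(a,b')$. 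By Proposition \ref{uniquesubspace} the subspace determined by $a,b'$ is again $\mathcal{P}_{ab}$, so both $c_1$ and $c_2$ lie in $\mathcal{P}_{ab}$; in particular $c_1'(c_2) = 1 - c_1(c_2)$, which is exactly the hypothesis under which (\ref{tvnproperty2}) is valid.

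The core step is to apply the restricted interference identity (\ref{tvnproperty2}) to the pair $(a,b)$ — whose solution is $c_1$ with $2\lambda(a,b) - 1 = \sqrt{a(b)}$ — evaluated at the point $z = c_2 \in \mathcal{P}_{ab}$. This gives $\half(a(c_2) + b(c_2)) = \sqrt{a(b)}\,c_1(c_2) + \half(1 - \sqrt{a(b)})$. Solving for $c_1(c_2)$ shows that the desired conclusion $c_1(c_2) = \half$ is \emph{equivalent} to $a(c_2) + b(c_2) = 1$, so the whole proposition reduces to establishing this last equation.

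To establish it I would use (\ref{lambdaeq}) applied to the pair $(a,b')$, which gives directly $a(c_2) = b'(c_2) = \half\bigl(1 + \sqrt{a(b')}\bigr) = \half\bigl(1 + \sqrt{1 - a(b)}\bigr)$. The term $b(c_2)$ is where the one genuinely non-obvious point lies: $c_2$ is built from $a$ and $b'$, not from $b$, so no defining equation controls $b(c_2)$ directly, and the temptation to read it off the construction of $c_2$ must be resisted. The resolution is that $c_2 \in \mathcal{P}_{ab}$ and $\{b,b'\}$ frames that subspace, so by (\ref{subspace}) we have $c_2(b) + c_2(b') = 1$, whence $b(c_2) = 1 - b'(c_2) = \half\bigl(1 - \sqrt{1 - a(b)}\bigr)$. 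Adding the two values gives $a(c_2) + b(c_2) = 1$ exactly, and substituting back yields $c_1(c_2) = \half$, i.e.\ $c_1 \smile c_2$. The main obstacle is thus purely conceptual rather than computational — recognizing that the frame-sum relation in the common two-dimensional subspace is the device that converts the $b'$-based datum supplied by (\ref{lambdaeq}) into the $b$-based quantity the argument actually needs.
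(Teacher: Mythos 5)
Your proof is correct, and it is essentially the mirror image of the paper's own argument: the paper evaluates the interference relation for the pair $(a,b\pr)$ at the point $c(a,b)$, whereas you evaluate the relation for the pair $(a,b)$ at the point $c_2 = c(a,b\pr)$. Concretely, the paper combines the defining equation (\ref{egammaequation}) with the frame-equivalence identity $b(z)+b\pr(z)=e(z)+e\pr(z)$ from Proposition \ref{anyframespans} to obtain the difference equation (\ref{differenceequation}), valid for \emph{all} $z \in \sys$; setting $z = c(a,b)$ kills the left side (since $a(c)=b(c)$ by (\ref{constwo})), and $\gamma \neq \half$ together with the frame sum $e(c)+e\pr(c)=1$ gives $e(c)=\half$. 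You instead work pointwise with the restricted identity (\ref{tvnproperty2}), reduce the claim to $a(c_2)+b(c_2)=1$, and verify that equation by computing both terms from (\ref{lambdaeq}) and the frame sum $b(c_2)+b\pr(c_2)=1$; your supporting steps all check out — well-definedness of $c_2$ via $a(b\pr)=1-a(b)\in(0,1)$, the membership $c_2 \in {\cal P}_{ab}$ via Proposition \ref{uniquesubspace}, and the implicit division by $\sqrt{a(b)} \neq 0$, which is licensed by non-orthogonality. As for what each route buys: the paper's detour through (\ref{differenceequation}) is not wasted motion, since that globally valid identity is reused later to prove $d(a,b)=\sqrt{1-a(b)}$ and again in Lemma \ref{ppptheorem}, while your argument is more local and self-contained. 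One simplification you missed: the explicit values from (\ref{lambdaeq}) are unnecessary, because (\ref{constwo}) applied to the pair $(a,b\pr)$ gives $a(c_2)=b\pr(c_2)$ directly, whence $a(c_2)+b(c_2)=b\pr(c_2)+b(c_2)=1$ immediately from the frame sum — so the "genuinely non-obvious point" you flag dissolves without any computation.
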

\begin{proof}
Since $a(b) \neq 0$ or $1$ implies that $a(b\pr) \neq 0$ or $1$, it follows from
(\ref{tvnproperty1}) that there exists a unique  orthogonal pair  $e,e\pr$  such that
\begin{equation}\label{egammaequation}
\half (a(z) + b\pr(z)) = \gamma e(z) + (1 - \gamma){e\pr}(z),
\end{equation}
\noindent
where
\begin{equation}
e = c(a,b\pr), \; e\pr = c\pr(a,b\pr),\;\; \gamma = \lambda(a,b\pr) = \half(1 + \sqrt{p(a,b\pr)}).
\end{equation}
Since $a$ and $b\pr$ belong to ${\cal P}_{ee\pr}$ as well as ${\cal P}_{ab}$, it follows from proposition {\bf{\ref{onepointincommon}}} that $e,{e\pr}$ belong to ${\cal P}_{ab}$. Thus both $e,{e\pr}$ and ${b},b\pr$
span ${\cal P}_{ab}$, whence from proposition {\bf{\ref{anyframespans}}}
\begin{equation}
b(z) + b\pr(z) = e(z) + {e\pr}(z) \;\; \text{ for all } z \in \sys.
\end{equation}
Combining this with (\ref{egammaequation}), we obtain
\begin{equation}\label{differenceequation}
a(z) - {b}(z) = (2 \gamma - 1)( e(z) - {e\pr}(z)).
\end{equation}
Since $a(b\pr) \neq 0$, it follows that $\gamma \neq \half$. Putting $z = c(a,b)$ the left side is zero by (\ref{tvnproperty1}), whence $e(c) = e\pr(c)$.
But $e(c) + e\pr(c) = 1$ since $c \in {\cal P}_{ee\pr}$, whence $e(c) = \half$.
\end{proof}
\vsk
Since Axiom IV was motivated by the entropic Turing-von Neumann effect  we must verify that it does indeed follow from it, i.e.\
we must show that there is an intermediate
measurement that reduces the increase of von Neumann
entropy caused by a  measurement:
\begin{lemma}\label{monotonic}
Let $\rho = \alpha a + (1-\alpha)a\pr$ be a state in the two dimensional subspace ${\cal P}$, and let $\sigma$ be the maximally mixed state.  The von Neumann entropy $H(\rho)$ increases monotonically as the distance $d(\rho,\sigma)$ decreases. 
\end{lemma}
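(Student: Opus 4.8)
The plan is to reduce both quantities $H(\rho)$ and $d(\rho,\sigma)$ to functions of the single mixing parameter $\alpha$ and then compare their behavior. By definition the von Neumann entropy of $\rho = \alpha a + (1-\alpha)a\pr$ is $H(\rho) = -\alpha\ln\alpha - (1-\alpha)\ln(1-\alpha)$, whose derivative in $\alpha$ is $\ln\frac{1-\alpha}{\alpha}$. This is positive for $\alpha < \half$ and negative for $\alpha > \half$, so $H(\rho)$ is a strictly decreasing function of $|\alpha - \half|$, attaining its maximum $\ln 2$ precisely when $\alpha = \half$, i.e.\ when $\rho = \sigma$. It therefore suffices to show that $d(\rho,\sigma)$ is a strictly increasing function of $|\alpha - \half|$, since the composition of a decreasing function with the inverse of an increasing one is again decreasing.

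To do this I would first record that the maximally mixed state in the two-dimensional subspace ${\cal P}$ is $\sigma = \half a + \half a\pr$, by (\ref{maximallymixed}). Then for any $z \in \sys$ the two frame functions combine linearly, giving $\rho(z) - \sigma(z) = (\alpha - \half)\bigl(a(z) - a\pr(z)\bigr)$, so that $|\rho(z) - \sigma(z)| = |\alpha - \half|\,|a(z) - a\pr(z)|$. Taking the supremum over $z$ and using the definition (\ref{dmetric}) of the $d$-metric yields $d(\rho,\sigma) = |\alpha - \half|\,\sup_{z}|a(z) - a\pr(z)|$. The supremum here is, again by (\ref{dmetric}), exactly $d(a,a\pr)$, which equals unity because $a \perp a\pr$ (as recorded in the remarks following (\ref{dmetric})). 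Hence $d(\rho,\sigma) = |\alpha - \half|$, a strictly increasing function of $|\alpha - \half|$, and substituting back shows $H(\rho)$ is a strictly decreasing function of $d(\rho,\sigma)$, which is the assertion.

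I do not expect a genuine obstacle in this lemma: once $\sigma$ is identified as the equal mixture $\half(a + a\pr)$ and the frame functions are combined linearly, everything collapses to the prior fact that orthogonal elements sit at $d$-distance one. The only point requiring a little care is confirming that $\sup_z|a(z) - a\pr(z)|$ is genuinely attained and equals one rather than merely being bounded by it; should one prefer a self-contained verification in place of citing $d(a,a\pr)=1$, I would extend $\{a,a\pr\}$ to a frame of $\sys$ and invoke the basis property (\ref{frameisbasis}) to get $a(z)+a\pr(z)\le 1$, hence $|a(z)-a\pr(z)| \le a(z)+a\pr(z) \le 1$, with equality saturated at $z=a$ where $a(a)-a\pr(a) = 1-0 = 1$ by Axiom I.
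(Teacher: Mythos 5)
Your proposal is correct and follows essentially the same route as the paper: identify $\sigma = \half a + \half a\pr$ via (\ref{maximallymixed}), factor the difference to get $d(\rho,\sigma) = |\alpha - \half|\sup_z|a(z)-a\pr(z)| = |\alpha - \half|$, and invoke the monotonic decrease of the binary Shannon function in $|\alpha - \half|$. Your extra care in verifying $\sup_z|a(z)-a\pr(z)| = 1$ via the basis property (\ref{frameisbasis}) and attainment at $z=a$ is a sound elaboration of a step the paper simply asserts.
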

\begin{proof}
By (\ref{maximallymixed}) $\sigma = \half a + \half a\pr$ so that 
 \begin{equation}\label{drhosigma}
 d(\rho,\sigma) = \sup_{z}|\rho(z) - \sigma(z)| = |\alpha - \half| \sup_{z}|a(z) - a\pr(z)| =  |\alpha - \half|.
 \end{equation}
 The assertion follows from the fact that
 \begin{equation} \label{shannonf}
f(\mu) = -(\mu \ln \mu + (1-\mu)\ln(1-\mu))
\end{equation}
is a monotonically decreasing function of $|\mu - \half| $ on the interval $0 \leq \mu \leq 1$.
 \end{proof}
\begin{proposition}\label{tvNtheorem}
If  $a \neq b$, there is a measurement $F_x^* = \{x,x\pr\}$ with $x \in {\cal P}_{ab}$ such that
\begin{equation}
H(F_b F_x a) < H(F_b a).
\end{equation}
\end{proposition}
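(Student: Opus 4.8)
The plan is to reduce the entropy inequality to a comparison of distances from the maximally mixed state, and then to feed in the interpolating measurement that Axiom IV hands us for free. Every output of the measurement $F_b$ is a mixture $\alpha b + (1-\alpha) b\pr$ of the antipodal pair $b,b\pr$ spanning ${\cal P}_{ab}$, and by Lemma \ref{monotonic} the von Neumann entropy of such a state is a monotonically decreasing function of its distance to the maximally mixed state $\sigma = \half b + \half b\pr$. Since both $F_b a$ and $F_b F_x a$ are of this form, it suffices to produce some $x \in {\cal P}_{ab}$ with $d(F_b F_x a,\sigma) > d(F_b a,\sigma)$.

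First I would record the baseline. Using the symmetry $b(a)=a(b)$ and $b\pr(a)=a(b\pr)=1-a(b)$, the measurement gives $F_b a = a(b)\,b + (1-a(b))\,b\pr$, so by the computation in Lemma \ref{monotonic} its distance to $\sigma$ is $|a(b)-\half|$. Next I would note that we may assume $a(b)\geq\half$: the frame $\{b,b\pr\}$ equals $\{b\pr,b\}$, and ${\cal P}_{ab\pr}={\cal P}_{ab}$ by Proposition \ref{uniquesubspace}, so relabelling $b\leftrightarrow b\pr$ changes neither hypothesis nor conclusion, while $a(b)+a(b\pr)=1$ forces one of the two to be at least $\half$. (If $a\perp b$ the statement is degenerate, since then $F_b a = b\pr$ is already pure and its entropy cannot be lowered.)

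The key step is to take $x=c(a,b)$, the unique interpolating element produced by Axiom IV. By equation (\ref{tvnproperty}) with $\alpha=\half$ (Proposition \ref{consequences}), this intermediate measurement collapses $a$ onto the equal mixture $F_c a = a(c)c + a(c\pr)c\pr = \half(a+b)$. Applying $F_b$ and using linearity together with $F_b b = b$ then gives
$$F_b F_c a = \half F_b a + \half b = \half(1+a(b))\,b + \half(1-a(b))\,b\pr,$$
so $d(F_b F_c a,\sigma) = \half a(b)$. Since $a\neq b$ forces $a(b)<1$, we have $\half a(b) - (a(b)-\half) = \half(1-a(b)) > 0$, hence $d(F_b F_c a,\sigma) > d(F_b a,\sigma)$, and Lemma \ref{monotonic} yields $H(F_b F_c a) < H(F_b a)$, which is the claim with $x=c$.

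The only genuine obstacle is locating the right intermediate measurement, and this is exactly what Axiom IV was constructed to supply: among all frames in ${\cal P}_{ab}$ it is the one that sends $a$ to the equal mixture of $a$ and $b$, i.e.\ the state lying symmetrically ``between'' $a$ and $b$, which is what polarizes the subsequent $F_b$-measurement more strongly toward $b$ and thereby lowers its outcome entropy. The reduction to $a(b)\geq\half$ is merely bookkeeping that fixes the direction of this polarization; equivalently one may treat the range $a(b)<\half$ directly with $x=c(a,b\pr)$, for which the identical computation gives $d(F_b F_x a,\sigma)=\half(1-a(b)) > \half-a(b) = d(F_b a,\sigma)$.
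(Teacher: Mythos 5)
Your proposal is correct and takes essentially the same route as the paper: reduce the entropy comparison to distances from the maximally mixed state via Lemma \ref{monotonic}, compute $d(F_b a,\sigma)=|a(b)-\half|$, and take $x=c(a,b)$ when $a(b)\geq\half$ (the paper handles $a(b)<\half$ by switching to $c(a,b\pr)$, which is just your relabelling $b\leftrightarrow b\pr$ made explicit), with the identical distance computations $a(b)/2$ versus $|a(b)-\half|$. Your parenthetical remark that the claim genuinely fails when $a\perp b$ --- since $F_b a$ is then already pure and, indeed, $c(a,b\pr)$ is not even defined there --- flags a degenerate case the paper's statement and proof silently overlook, and is a small but real improvement.
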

\begin{proof} 
Let  $\sigma$ be the maximally mixed state in ${\cal P}_{ab}$.
Let $c = c(a,b)$ and $e = c(a,b\pr)$ as defined in the note following Proposition  {\bf{\ref{cunique}}}.  If $a(b) \geq \half$, choose $x = c$, and if
$a(b) < \half$, choose $x = e$. We have:
$F_b a = \tau  b + (1 - \tau)b\pr$ with $\tau = a(b)$, whence
$d(F_b a,\sigma) = |\tau - \half| = |a(b)- \half|$. On the other hand
 $F_b F_c a = F_b(\half( a + b)) = \mu  b + (1-\mu)b\pr$ with $\mu = \half(1 + a(b))$, whence
$d(F_b F_c a,\sigma) = |\mu - \half| = a(b)/2$. If $a(b) \geq \half$, then $a(b)/2 > |a(b) - \half|$, so that 
$d(F_b F_c a,\sigma) > d(F_b a,\sigma)$,  and hence 
$H(F_bF_c a) < H(F_ba)$ by Lemma {\bf{\ref{monotonic}}}.
If $a(b) < \half$, the replacement of  $F_c$ by $F_e$ replaces $b$ with $b\pr$, and since $a(b\pr) > \half$, the same conclusion
is obtained.
\end{proof}
\vsk
 Although the  metric $d(x,y)$ as defined by (\ref{dmetric}) seems to require a knowledge of $x(z)$ and $y(z)$ for all $z$, we shall see 
 as a by-product of (\ref{differenceequation}) 
 that it is determined by $x(y)$ alone, and the form of that
dependence  is inconsistent with local hidden variables:
\begin{proposition}
\begin{equation}\label{dtop}
d(a,b) = \sqrt{1 - a(b)}  \; \text{ for all } a,b \in \sys .
\end{equation}
\end{proposition}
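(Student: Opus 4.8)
The plan is to read the result straight off the difference equation (\ref{differenceequation}), once two degenerate cases are disposed of and the constant multiplying $e(z)-e\pr(z)$ is correctly identified. First I would handle the \emph{compatible} pairs, where the claim is immediate: if $a=b$ then $a(b)=1$ and $d(a,a)=0$, while if $a\perp b$ then $a(b)=0$ and the earlier characterization of the $d$-metric (distance $1$ exactly when the two elements are orthogonal) gives $d(a,b)=1$; both values agree with $\sqrt{1-a(b)}$. Hence all the content lies in the generic case $a\ncl b$.

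For that case, let $b\pr$ be the antipode of $b$ in ${\cal P}_{ab}$ and let $e=c(a,b\pr)$, $e\pr=c\pr(a,b\pr)$ be the orthogonal pair produced by Axiom IV, so that (\ref{differenceequation}) reads $a(z)-b(z)=(2\gamma-1)(e(z)-e\pr(z))$ with $\gamma=\half(1+\sqrt{a(b\pr)})$. Since $\{b,b\pr\}$ spans ${\cal P}_{ab}$ and $a$ lies in that subspace, the subspace relation (\ref{subspace}) together with the symmetry (\ref{basic}) gives $a(b)+a(b\pr)=1$, hence $a(b\pr)=1-a(b)$ and therefore $2\gamma-1=\sqrt{a(b\pr)}=\sqrt{1-a(b)}\ge 0$. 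Taking the supremum over $z$ in the definition (\ref{dmetric}) and pulling out this nonnegative constant yields
$$ d(a,b)=\sqrt{1-a(b)}\;\sup_{z}|e(z)-e\pr(z)|=\sqrt{1-a(b)}\;d(e,e\pr). $$
Because $e\perp e\pr$, the distance $d(e,e\pr)$ equals unity by the same characterization of the value $1$ used in the orthogonal case above, and the formula follows.

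The step I expect to be the crux, though it is brief, is the recognition that the coefficient $2\gamma-1$ appearing in (\ref{differenceequation}) is exactly $\sqrt{1-a(b)}$; everything else is bookkeeping with Axioms I and III and the subspace relations. A secondary point worth stating explicitly is why $\sup_{z}|e(z)-e\pr(z)|$ equals $d(e,e\pr)$ rather than merely being bounded by it: this is nothing more than the definition (\ref{dmetric}) applied to the pair $e,e\pr$, so no extremizing element over the unknown $z$ need be exhibited, and the nonnegativity of $2\gamma-1$ is what lets the constant be pulled outside the supremum without sign worries. I would close by remarking, as the surrounding text anticipates, that the square root makes $d$ depend on $a(b)$ alone, which is the feature later shown to be incompatible with local hidden variables.
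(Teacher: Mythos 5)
Your proposal is correct and follows essentially the same route as the paper: dispose of the cases $a(b)=0,1$ directly from the definition of $d$, then take the supremum over $z$ of both sides of (\ref{differenceequation}) to get $d(a,b)=|2\gamma-1|\,d(e,e\pr)=\sqrt{a(b\pr)}=\sqrt{1-a(b)}$, using $d(e,e\pr)=1$ for the orthogonal pair $e,e\pr$. Your extra bookkeeping --- deriving $a(b)+a(b\pr)=1$ from the fact that $\{b,b\pr\}$ spans ${\cal P}_{ab}$, and noting that the nonnegativity of $2\gamma-1$ lets the constant pass through the supremum --- merely makes explicit steps the paper leaves implicit.
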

\begin{proof} 
We can assume  $a(b) \neq 0 $ or $1$,  since otherwise the assertion follows from the definition (\ref{dmetric}) of $d$.  
By proposition {\bf {\ref{uniqueantipode}}},  $b$ has a unique antipode $b\pr$ in ${\cal P}_{ab}$. 
Take the supremum over $z$ of the absolute value on both sides of (\ref{differenceequation}) to obtain:
\begin{equation}
d(a,b) = |2\gamma - 1|d(e,e\pr) =  |2\gamma - 1| = \sqrt{a(b\pr)} = \sqrt{1 - a(b)}.
\end{equation}
\end{proof}
\begin{proposition}
If the property (\ref{tvnproperty1}) holds for $\sys$, its $p$-table cannot be reproduced by a local hidden variable model unless
it is classical i.e.\ unless $p$ only assumes the values zero and unity.
\end{proposition}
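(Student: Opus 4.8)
The plan is to extract from the relation $d(a,b)=\sqrt{1-a(b)}$ of (\ref{dtop}) a contradiction with the metric that a local hidden-variable model is forced to produce. A classical (non-contextual) model assigns a \emph{linear} dependence $d(a,b)=1-a(b)$, and $\sqrt{1-a(b)}=1-a(b)$ only when $a(b)\in\{0,1\}$; so I would aim to show that in any such model $d(a,b)=1-a(b)$ exactly.

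First I would make the hypothesis precise. A local hidden-variable model is a probability space $(\Omega,\mu)$ together with a response function $r_a:\Omega\to[0,1]$ for each filter, the probability of passing $a$ depending on the hidden value $\xi$ and on $a$ alone, so that the transmission is the conditional probability $a(b)=\bigl(\int r_a r_b\,d\mu\bigr)/\bigl(\int r_a\,d\mu\bigr)$. Axiom I (\ref{basic}) in the form $a(a)=1$ forces $\int r_a^2\,d\mu=\int r_a\,d\mu$, and since $r_a^2\le r_a$ this gives $r_a(1-r_a)=0$ almost everywhere; hence $r_a=\mathbf 1_{A_a}$ is the indicator of a measurable set $A_a$. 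The symmetry $a(b)=b(a)$ then forces $\mu(A_a)=\mu(A_b)$ on any overlapping (non-orthogonal) pair, so I may normalise $\mu(A_a)=\tfrac12$, whence $a(b)=2\mu(A_a\cap A_b)$ and the symmetric-difference identity $\mu(A_a\,\triangle\,A_b)=1-a(b)$.

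Next I would evaluate the $d$-metric of (\ref{dmetric}) inside the model. Because $a(a)=1$ and $b(a)=a(b)$, the admissible choice $z=a$ already gives the lower bound $d(a,b)\ge|a(a)-b(a)|=1-a(b)$, and this bound is valid with no hidden-variable assumption at all. For the matching upper bound I would use, for an arbitrary filter $z$,
\[
a(z)-b(z)=2\mu\bigl((A_a\setminus A_b)\cap A_z\bigr)-2\mu\bigl((A_b\setminus A_a)\cap A_z\bigr),
\]
whose absolute value is at most $2\mu(A_a\setminus A_b)=1-a(b)$; this is just the reverse triangle inequality for the $L^1$ (symmetric-difference) metric $\mu(A_a\,\triangle\,A_b)$. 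Taking the supremum over $z$ therefore yields $d(a,b)=1-a(b)$ in any local hidden-variable model.

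Comparing with (\ref{dtop}), which holds precisely because (\ref{tvnproperty1}) is assumed, I obtain $1-a(b)=\sqrt{1-a(b)}$, hence $(1-a(b))^2=1-a(b)$ and $a(b)\in\{0,1\}$ for every pair; a non-classical value $0<a(b)<1$ is impossible. The step I expect to be the real obstacle is not the metric computation but pinning down the word ``local'' so that the canonical representation $a(b)=2\mu(A_a\cap A_b)$ is genuinely forced rather than merely postulated: I must argue that non-contextuality of the $r_a$, together with Axiom I and the symmetry (\ref{basic}), leaves no freedom beyond this deterministic, equal-measure form (e.g.\ that within a two-dimensional subspace the relevant part of $\Omega$ is exactly $A_a\sqcup A_{a'}$, legitimising $\mu(A_a)=\tfrac12$). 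Once that is secured, the clash between the linear classical metric and the Euclidean-type $\sqrt{1-a(b)}$ of (\ref{dtop}) does the rest.
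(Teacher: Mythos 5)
Your proposal is correct and takes essentially the same route as the paper: the paper's Appendix likewise shows that any local hidden-variable model forces the linear metric law $d(a,b)=1-a(b)$, which clashes with the $d(a,b)=\sqrt{1-a(b)}$ of (\ref{dtop}) implied by (\ref{tvnproperty1}) unless $a(b)\in\{0,1\}$. The only differences are in rigor, and in your favor: you derive the deterministic indicator form and equal measures from Axiom I (where the paper simply postulates $p(x,y)=\mu(\Lambda(x)\cap\Lambda(y))$ with $\mu(\Lambda(x))=1$), and you replace the paper's informal ``overlap cancels, so maximize as if the sets were disjoint'' step with a clean two-sided bound --- lower bound from $z=a$, upper bound from the symmetric-difference (reverse triangle) inequality --- so your closing worry about normalization is the only point where the paper, too, relies on the modeling postulate rather than a derivation.
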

\begin{proof}
It is known \cite{FIV1}  (see proof  in  the Appendix) that the relation between $d(a,b)$ and $a(b)$ in any hidden-variable model 
differs from (\ref{dtop}) in that the right side is $1 - a(b)$ rather than $\sqrt{1 - a(b)}$. Hence such models are inconsistent
with Axiom IV unless $a(b)$ only assumes the values $0$ and $1$ for all elements, i.e.\ the system is classical.
\end{proof}
\begin{comment}
Without the square root the triangle inequality for $d$ becomes 
\begin{equation}
x(z) + z(y) \leq 1 + x(y),
\end{equation}
which is Bell's inequality.
It is also shown in the  Appendix  that the  CRQM gives (\ref{dtop}). This strongly suggests that property (\ref{tvnproperty1}) has
brought us close to the CRQM.
\end{comment}
\vsk
\section{Structure of two dimensional subspaces}
We shall now exploit Axioms I-IV to show that two dimensional subspaces are isometric to spheres. It is important to 
distinguish the dimension $N = 2$ of the subspace from the dimension of the spheres which, as we shall see, is determined
by the ``rank" of the subspace given by the following:
\begin{definition}
The rank $R$ of a two dimensional subspace is one less than the maximum number
of elements in a mutually equatorial set.
\end{definition}
\begin{proposition}\label{maintheorem1} If ${\cal P}$ is a two dimensional subspace of rank $R$,
there is a one-one correspondence  $x \to X$ between the elements $x$ of $ {\cal P}$ and the points $X$ of a unit $R$-sphere
such that
\begin{equation}\label{bornruleprelim}
x(y) = \cos^2\half XY
\end{equation}
where $XY$ is the angle subtended at the center of the sphere by the arc joining $X$ and $Y$.
\end{proposition}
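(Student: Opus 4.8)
The plan is to linearize the $p$-table by passing to the symmetric form $B(x,y) := 2x(y)-1$ on ${\cal P}$ and to realize it as the Gram matrix of unit vectors on $S^R$. Under the desired correspondence the target relation $x(y)=\cos^2\half XY$ reads $B(x,y)=\cos XY = X\cdot Y$, so it suffices to produce a map $x\mapsto X$ into the unit sphere of some ${\mathbb R}^{R+1}$ with $X\cdot Y=B(x,y)$: injectivity is then automatic, since by (\ref{dtop}) $\|X-Y\|^2 = 2-2B(x,y)=4\,d(x,y)^2$ and the $d$-metric separates points by Axiom I, while the stated formula is immediate from $X\cdot X=1$. I would first record that $B$ is symmetric with $B(x,x)=1$ and $-1\le B(x,y)\le 1$, that mutually equatorial pairs ($x\smile y$) are exactly the $B$-orthogonal pairs, and that if $x'$ is the antipode of $x$ in ${\cal P}$ then $x(z)+x'(z)=1$ on ${\cal P}$ by (\ref{subspace}), so that $B(x',\cdot)=-B(x,\cdot)$ and antipodes must map to antipodal points.

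The mechanism producing the linear structure is the interference identity (\ref{tvnproperty1}). Rewriting (\ref{tvnproperty2}) in terms of $B$ and using $2\lambda-1=\sqrt{a(b)}$ from (\ref{lambdaeq}) gives, for all $z\in{\cal P}$ and $c=c(a,b)$,
$$B(a,z)+B(b,z) = 2\sqrt{a(b)}\;B(c,z),$$
i.e.\ $B(c,\cdot)$ is the normalized sum of $B(a,\cdot)$ and $B(b,\cdot)$, which is exactly the spherical law that the bisecting direction of two unit vectors at angle $\theta$ is their sum divided by $2\cos\half\theta=2\sqrt{a(b)}$. Fixing a maximal mutually equatorial set $\{u_0,\dots,u_R\}$, which has $R+1$ elements by the definition of rank, I would define coordinates $X_i:=B(x,u_i)$ and set $X:=(X_0,\dots,X_R)$. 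The heart of the argument is the Parseval/closure identities
$$\sum_{i=0}^R X_i^2 = 1, \qquad \sum_{i=0}^R X_i Y_i = B(x,y),$$
which say that the equatorial set is a genuine orthonormal basis for the embedding and that nothing escapes ${\mathbb R}^{R+1}$.

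I expect these closure identities to be the main obstacle, and I would prove them by induction on $R$, peeling off one polar axis at a time. Choosing $u_0$ and decomposing ${\cal P}$ into the pole $u_0$, its antipode, and the equator ${\cal E}_{u_0}\cap{\cal P}$, Proposition \ref{equatorialtheorem} and the relation (\ref{eiscabprime}) show that the $c(\,\cdot\,,\,\cdot\,)$ construction carries equatorial elements to equatorial elements, so the equator is closed under bisection and forms a two-dimensional-subspace-like configuration of rank $R-1$; the inductive hypothesis embeds it as $S^{R-1}$, and the polar-angle decomposition (the $u_0$-component contributing $\cos$ and the equatorial part $\sin$ of the polar angle) supplies the Pythagorean sum, the base case $R=1$ being the elementary circle identity $\cos^2\half\theta+\sin^2\half\theta=1$. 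Granting the closure identities, $\Phi\colon x\mapsto X$ lands on $S^R$ with $\Phi(x)\cdot\Phi(y)=B(x,y)$, yielding (\ref{bornruleprelim}) and injectivity as above.

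Finally, for surjectivity onto the whole sphere I would argue that $\Phi({\cal P})$ is closed, being the continuous image of the compact space ${\cal P}$ (Axiom II together with continuity of the functions $x(\;)$), is symmetric under $X\mapsto -X$ by the antipode map, and is closed under geodesic midpoints, since the displayed bisection identity says precisely $\Phi(c(a,b)) = (\Phi(a)+\Phi(b))/\|\Phi(a)+\Phi(b)\|$. A closed, antipode-symmetric subset of $S^R$ that contains the orthonormal frame $\{\Phi(u_i)\}$ and is closed under geodesic midpoints is dense in every great circle through two of its points, and generating outward from the frame it must therefore equal all of $S^R$. This establishes the bijection $x\leftrightarrow X$ and completes the proof.
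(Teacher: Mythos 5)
Your linearization $B(x,y)=2x(y)-1$ and the bisection identity $B(a,z)+B(b,z)=2\sqrt{a(b)}\,B(c,z)$ (valid for all $z\in{\cal P}$, by (\ref{tvnproperty2}) and (\ref{lambdaeq})) are correct, and your reduction of the proposition to the two closure identities $\sum_i B(x,u_i)^2=1$ and $\sum_i B(x,u_i)B(y,u_i)=B(x,y)$ is a genuinely different framing from the paper, which instead builds ``properly mapped'' subsets by antipode and midpoint adjunction and then saturates them using the rank. Your injectivity argument and your final covering argument are fine \emph{modulo} those identities (the latter closely parallels the paper's own covering step). But the closure identities are exactly where the proposal stops being a proof. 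The induction you sketch has two holes. First, the inductive hypothesis is the proposition itself, a statement about two-dimensional subspaces; the equator ${\cal E}_{u_0}\cap{\cal P}$ is \emph{not} a two-dimensional subspace (it is not spanned by an orthogonal pair in the sense of (\ref{subspace})), so the statement must be reformulated for sets merely closed under bisection and antipodes, and its hypotheses re-verified — routine perhaps, but not done. Second, and decisively, the ``polar-angle decomposition'' is asserted rather than constructed: nothing in Axioms I--IV hands you, for a general $x\in{\cal P}$, an azimuthal element $\tilde x\in{\cal E}_{u_0}$ satisfying
\begin{equation}
B(x,z)=B(x,u_0)B(u_0,z)+\sqrt{1-B(x,u_0)^2}\;B(\tilde x,z)\quad\text{for all } z\in{\cal P},
\end{equation}
and without this identity — a disguised spherical law of cosines, which is what makes your cross term $\sum_i X_iY_i=B(x,y)$ come out — both Parseval identities are unproven and the embedding collapses. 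To manufacture $\tilde x$ you must iterate your bisection identity between $x$ and the poles $u_0,u_0\pr$, track the invariant $(B(y,z)-\cos\alpha_y\,B(u_0,z))/\sin\alpha_y$ along the iteration, and pass to a $d$-metric limit via compactness (Axiom II); that is, you must rebuild the meridian circle through $u_0$ and $x$, which is precisely the content of Lemma \ref{circleadjunction} that your outline bypasses. The base case suffers the same defect: that a rank-$1$ subspace is a circle (e.g.\ that its equator is a single antipodal pair) is not the ``elementary circle identity'' but again requires the dense-circle construction.

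A repair is available inside your own framework, and it shows how close your route runs to the paper's: let ${\cal G}$ be the set of $x\in{\cal P}$ satisfying $\sum_i B(x,u_i)B(z,u_i)=B(x,z)$ for all $z\in{\cal P}$. Then ${\cal G}$ contains the frame $\{u_i\}$ (immediate, since $B(u_j,u_i)=\delta_{ij}$), is closed in the $d$-metric by continuity, is stable under antipodes since $B(x\pr,\cdot)=-B(x,\cdot)$ on ${\cal P}$, and is stable under bisection by your displayed identity applied coordinatewise. What remains is a saturation lemma: a $d$-closed, bisection- and antipode-stable subset of ${\cal P}$ containing a maximal mutually equatorial set is all of ${\cal P}$. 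That lemma is exactly the paper's rank argument (the chain $f_1,f_2,\dots,f_n$ contradicting maximality of the equatorial set), and it in turn rests on the circular subsets of Lemma \ref{circleadjunction} together with Proposition \ref{equatorialtheorem} and (\ref{eiscabprime}). So your Gram-matrix formulation is sound and arguably cleaner than angle-chasing, but the analytic engine it needs — iterated midpoints, completeness, and rank saturation — is the very machinery the paper builds, and it cannot be replaced by the unproven polar decomposition.
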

\begin{proof}
We shall need several lemmas:
\vsk
Define $\theta(x,y)$ by 
\begin{equation}\label{ptotheta}
x(y) = \cos^2\half(\theta(x,y).
\end{equation}
\begin{definition}
A set of elements $x_1,x_2,,\cdots$ is said to be  {\bf properly mapped} to the points $X_1,X_2,\cdots$ of an $R$-sphere if
$\theta(x_i,x_j) = X_i X_j$ for all $i,j$.
\end{definition}
\noindent
To prove Proposition {\bf{\ref{maintheorem1}}} we shall prove that there is a one-one proper mapping of  the elements of ${\cal P}$ to
the points of an $R$-sphere.
\vsk
Because we are restricting the elements to a two dimensional subspace, we can use the special form (\ref{tvnproperty2}) of Axiom IV. Rewriting this
in terms of $\theta$ defined by (\ref{ptotheta}),  it becomes:
 \begin{equation}\label{tvnproperty3}
 \cos \theta(a,z) + \cos \theta(b,z) = 2\cos\half \theta(a,b)\cos\theta(c,z),
 \end{equation}
 and 
 \begin{equation}\label{solutiontheta}
 \theta(a,c) = \theta(c,b) = \half \theta(a,b).
 \end{equation}
\vsk
 The geometric significance of (\ref{tvnproperty3}) and  (\ref{solutiontheta}) is revealed by the following 
 lemma which applies to an $R$-sphere for any $R$.
 \vsk
 \begin{lemma}\label{sphericaltrig} 
 Let $A,B,C,Z$ be points on an $R$-sphere,  with $C$  the midpoint of the arc joining $A$ and $B$. Then
 \begin{equation}\label{m-sphere-equation}
\cos(AZ) + \cos(BZ) =   2 \cos (\half AB)) \cos(CZ).
\end{equation}
\end{lemma}
\vsk
\begin{proof}
If $A,B$ are antipodes, the right and left sides both vanish.  Assume  they are not antipodes.
 Let $\hat{\bf x}_A, \hat{\bf x}_B, \hat{\bf x}_Z$ be unit vectors from the center to the points $A,B,Z$
respectively of a unit $R$-sphere. The vector ${\bf x} = \half(\hat{\bf x}_A + \hat{\bf x}_B)$ connects the center to the midpoint of the chord
joining $A,B$. Its length is $$|{\bf x}| = \sqrt{(1 +  \hat{\bf x}_A \cdot \hat{\bf x}_B)/2} = \cos (\half AB)$$ which does not vanish if $A$ and $B$ are not antipodes.  Hence $\hat{\bf x}_C = ((\hat{\bf x}_A + \hat{\bf x}_B)/(2 \cos (\half AB))$ is a unit  vector from the center to the midpoint $C$ of the great circle arc 
joining $A$ and $B$.  Hence if  $CZ$ is the arc joining $C$ and $Z$ we have
$$
\cos(CZ) = \hat{\bf x}_C\cdot \hat{\bf x}_Z = (\hat{\bf x}_A\cdot \hat{\bf x}_Z + \hat{\bf x}_B\cdot \hat{\bf x}_Z)/(2 \cos (\half AB)) = $$ $$
(\cos(AZ) + \cos(BZ))/(2 \cos (\half AB))$$
whence
(\ref{m-sphere-equation}) follows. 
\end{proof}
\vsk
 \begin{lemma}\label{antipodeadjunction}
Suppose that a subset of the two-dimensional subspace ${\cal P}$  consisting of  $z_1, z_2, \cdots$  is  properly mapped to the set of points $Z_1,Z_2,\cdots$ on the $R$-sphere $S_R$.  Then the subset consisting
of $z_1,z_2,\cdots,z_1\pr,z_2\pr,\cdots$, where $z_j\pr$ is the antipode of $z_j$ on ${\cal P}$,  is properly mapped to $Z_1,Z_2,\cdots, Z_1\pr,Z_2\pr,\cdots$, where $Z_j\pr$ and $Z_j$ are antipodes on ${\cal S}_R$.
\end{lemma}
\begin{proof}
For any elements $x,y$ of the set $z_1,z_2,\cdots,z_1\pr,z_2\pr,\cdots$, we have
$$x\pr(y) = 1 - x(y) = \sin^2(\theta(x,y)/2) = \cos^2((\theta(x,y) + \pi)/2)= $$
$$\cos^2((XY + \pi)/2) = \cos^2(X\pr Y/2)$$ whence
$$\theta(x\pr,y) = X\pr Y .$$
\end{proof}
\noindent
We refer to the result of this lemma as {\em antipode adjunction} to a properly mapped subset.
\vsk
\begin{lemma}\label{midpointadjunction}
Suppose that a subset of the two-dimensional subspace ${\cal P}$  consisting of  $a \ncl b$ and 
$z_1, z_2, \cdots$  can be  properly mapped to the points $\{A,B,Z_1,Z_2,\cdots \}$ on $S_R$. Let 
$C$ be the midpoint of the shorter great circle arc
joining $A,B$, and let $c = c(a,b)$ of (\ref{tvnproperty1}).
 Then the mapping $c \to C$ extends the proper mapping to the set
$\{a,b,c,z_1,z_2,\cdots\}$.
\end{lemma} \vsk
\begin{proof}
\vsk
Let $z$ be any of the $z_j$'s.
If $\{a,b,z\} \to \{ A, B, Z\}$ is a proper mapping, it follows from (\ref{tvnproperty2}) that
\begin{equation}\label{ifproper}
 \cos(AZ) + \cos(BZ) = 2\cos(\half AB)\cos(\theta(c,z)).
 \end{equation}
 If $a \ncl b$, then $AB \neq \pi$ so that $\cos(\half AB) \neq 0$, whence from
 (\ref{tvnproperty2}) we  have
\begin{equation}
\theta(c,z) = CZ,
\end{equation}
and from  (\ref{solutiontheta}),
\begin{equation}
 \theta(a,c) = \half\theta(a,b) = \half AB = AC,\;\;
  \theta(c,b) = \half\theta(a,b) = \half AB = CB.
  \end{equation}
  Hence $c \to C$  extends the proper mapping  $\{a,b,z\} \to \{A,B,Z\}$ to $\{c,a,b,z\} \to \{C,A,B,Z\}. $
  \end{proof}
  \vsk
  We call the result of this lemma {\em midpoint adjunction} to a properly mapped subset. 
\vsk
\begin{lemma}\label{circleadjunction}
\vsk
Suppose that a subset  ${\cal K}$  of the two-dimensional subspace ${\cal P}$ can be properly mapped to a unit $R$-sphere $S_R$. If $a \ncl b \in {\cal K}$, there exists a subset ${\cal C}_{ab}$ of ${\cal P}$ containing $a$ and $b$, referred to as a {\em circular subset}, which can be properly mapped to a great circle $\tilde{\cal C}_{AB}$ on $S_R$ in such a way that the union of $ {\cal C}_{ab}$ and ${\cal K}$ is properly mapped to $S_R$.
If $a$ and $b$ belong to the equator 
${\cal E}(z)$
 of some state $z \in {\cal P}$, 
then every element $y \in {\cal C}_{ab}$ also belongs to ${\cal E}(z).$
\end{lemma}
\vsk
\begin{proof}  
Let ${\cal P}$ be a two-dimensional subspace with elements 
 $a\ncl b$ .
As shown in Figure 1: Let $a$ and $b$ be mapped to points $A,B$ of a unit  great circle $\tilcal{ C}_{AB}$ on an $R$-sphere separated by an arc
$ \theta(a,b)$. By definition the set $\{a,b\}$ is properly mapped. By antipode adjunction the set $\{a,b,a\pr,b\pr\}$ is properly mapped to $A,B,A\pr,B\pr$.  Let $e_1 = c(a,b)$ and $e_2 = c(a,b\pr)$ be mapped to the midpoints $E_1,E_2$ of the arcs joining $A,B$ and $A,B\pr$ respectively and their 
 antipodes to the points opposite. As noted earlier, $e_1$ and $e_2$ are an equatorial pair.
  By the midpoint adjunction the set $\{a,b,e_1,e_2, a\pr,b\pr ,e_1\pr,e_2\pr\}$ is now properly mapped to $\tilcal{C}_{AB}$. \vsk
  {\centering{\hskip2.0in\parbox{2cm}{\includegraphics[width= 2in]{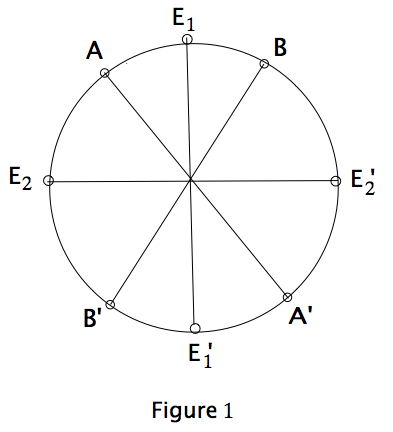} }}}
  \vsk
  The four points $E_1,E_2,E_1\pr,E_2\pr$ are at points that are equally spaced by $\pi/2$.  We can  apply midpoint adjunction $k$ times to extend the proper mapping to include elements of ${\cal P}$ whose images are spaced by $\pi/2^k$. With $k$ arbitrarily large the images become dense on $\tilde{\cal C}_{AB}$ in the round-metric.  The completion of this set of elements in the $d$-metric is the subset ${\cal C}_{ab}$ of ${\cal P}$. From proposition {\bf \ref{equatorialtheorem}} if $a \ncl b \in {\cal E}(z)$, then $e_1,e_2$ and all successive elements of ${\cal C}_{ab}$ obtained by
 midpoint adjunction will be in ${\cal E}(z)$.
  \end{proof}
 \vsk
  Note:  Here we  used Axiom II to insure the completeness of  ${\cal P}$ in the $d$-metric.
  \vsk
    \begin{corollary}\label{equatorcorollary}
  If $x \in {\cal C}_{ab}$  there is a unique pair of  states $y,y\pr \in {\cal C}_{ab} $ such that $x \smile y$ and $ x \smile y\pr$.
   \end{corollary}   
  \begin{proof}
  The required states $\{y,y\pr\}$  are the points of ${\cal C}_{ab}$ with images $\{Y,Y\pr\}$ at the end points of the diameter of  $\tilcal{C}_{AB}$ at right angles to 
  the diameter joining the images 
  $\{X,X\pr\}$ of $\{x,x\pr\}.$
  \end{proof}  
    \vsk
We now combine the results of the above lemmas to complete the proof of proposition
{\bf  \ref{maintheorem1}}:
\vsk 
Let ${\cal P}$ be of rank $R$ so that there are $n = R+1$ elements in a maximal set of mutually equatorial elements. Let
  $e_1, e_2, \cdots, e_n$ of ${\cal P}$ be any such set.  It can be properly mapped to an
  to a unit $R$ sphere  $S_R$ by placing the images  $E_j$ of $e_j$  
for $j = 1,\cdots,n$ at the points of $S_R$ with $i$'th cartesian coordinate $\delta_{ij}$. By Lemma {\bf \ref{circleadjunction}} the proper mapping can be extended to include a subset $ {\cal P}^*$ of ${\cal P}$ with the property that it contains ${\cal C}_{xy}$ for every
pair $x \ncl y$ of its elements. We shall prove that ${\cal P}^* = { {\cal P}}$ and
that the set of images of $ {\cal P}$ covers $S_R$.
\vsk
Suppose first that there is some $x \in {\cal P}$  such that $x \not \in  {\cal P}^*$.
Then the circular set ${\cal C}_{e_1 x}$ can contain no point other than $e_1$ of  
$ {\cal P}^*$ for otherwise $x$ would be in ${\cal P}^*$.  Hence, by the Corollary to Lemma {\bf \ref{circleadjunction}}, there is an   element $f_1 \not \in{\cal P}^*$ but $f_1 \in {\cal C}_{e_1 x}$ such that $ f_1 \smile  e_1$.  Now the circular set ${\cal C}_{e_2 f_1}$ may contain no  element of ${\cal P}^* $ other than $e_2$ for otherwise $f_1$ would belong to ${\cal P}^* $.  Since $f_1 \smile e_1$ and $e_2 \smile e_1$ it follows from Lemma {\bf \ref{circleadjunction}} that every element $f$ of ${\cal C}_{e_2 f_1}$ satisfies $f \smile e_1$.  In particular this is true of an element $f_2$ of ${\cal C}_{e_2 f_1}$, which  exists by the last part of Lemma {\bf \ref{circleadjunction}}, that satisfies $f_2 \smile e_2$. Thus $e_1,e_2,f_2$ is a mutually equatorial set in which $f_2$ is not   an element of ${\cal P}^*$ since otherwise $f_1$ would be a member.  
Next we  construct  ${\cal C}_{e_3 f_2}$ and repeat this process until we produce
a state $f_n$  which belongs to ${\cal P}$ but does not belong to ${\cal P}^*$ and has the property that  $e_1,e_2,\cdots,e_n,f_n$ is a mutually equatorial set.  But this makes
the rank of ${\cal P}$ larger than $R$ which is a contradiction. This proves
that ${\cal P} = {\cal P}^*$.\vsk
Proof  that  $S_R$ is completely covered by the mapping is obtained by essentially the same argument:  Suppose that some point $X$ on the sphere
does not appear. Then no point $Y$ other than $E_1$ on the great circle $C_{E_1X}$ can appear. For if $Y$ were the image of some $y \in {\cal P}$, then the entire circle would appear as the image of the circular set ${\cal C}_{e_1 y}$.
In particular there is a point $F_1$ on  $C_{E_1 X}$ which is equatorial to $E_1$ and does not occur in the mapping. Similarly no point on
 $C_{E_2F_1}$ can occur, and every point on it is equatorial to $E_1$. In particular it contains a point $F_2$ which is equatorial to both $E_1$ and $E_2$. 
  Proceeding in this way we construct a point $F_n$ on $S_R$ which is equatorial to $E_1,\cdots,E_n$ which is impossible. 
 \vsk
 With this result we have established that there is a one-one proper mapping of the completion of ${\cal P}$ of rank $R$ to
 the points of an $R$-sphere which proves proposition {{\bf \ref{maintheorem1}}}. 
 \end{proof}
 \section{Two dimensional subspaces of rank $R = 2$}
 We next make the connection between two dimensional subspaces with rank $R = 2$ and the CRQM for $N$=2 subspaces.
 \vsk
 \begin{proposition}\label{maintheorem2}
 There is a one-one correpondence between the points $x$ of a two dimensional subspace of ${\cal S}$ of rank $R = 2$ and the  points $\bar{x}$ of $CP^1$ such that
\begin{equation}\label{mainequation2}
x_1(x_2) =|\hat{x}_1^*\cdot \hat{ x}_2|^2 \text{  where  } \hat{x} = \bar{x}/|x|
\end{equation}
which is the Born Rule.
\end{proposition}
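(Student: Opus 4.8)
The plan is to reduce the claim to the sphere correspondence already established in Proposition \textbf{\ref{maintheorem1}} and then compose with the classical identification of the $2$-sphere with $\cfield P^1$. Applying Proposition \textbf{\ref{maintheorem1}} with $R = 2$ gives a one-one proper mapping $x \to X$ of the subspace ${\cal P}$ onto a unit $2$-sphere $S_2$ satisfying
\begin{equation}
x(y) = \cos^2\half(XY),
\end{equation}
where $XY$ is the great-circle arc joining the images. It therefore suffices to exhibit an isometry of $S_2$ onto $\cfield P^1$ under which the right-hand side above becomes the Born-rule expression $|\hat{x}_1^*\cdot\hat{x}_2|^2$; the correspondence $x \to \bar{x}$ is then the composite of the two bijections, and one-one-ness and onto-ness are inherited for free from Proposition \textbf{\ref{maintheorem1}}.

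For the second step I would use the standard Bloch parametrization. To a point $X \in S_2$ with polar coordinates $(\theta,\phi)$, i.e.\ with Cartesian unit vector $\mathbf{n} = (\sin\theta\cos\phi,\,\sin\theta\sin\phi,\,\cos\theta)$, I assign the ray $\bar{x} \in \cfield P^1$ represented by $\hat{x} = (\cos\half\theta,\; e^{i\phi}\sin\half\theta)$; the overall phase is quotiented out in passing to the projective space, so this is a bijection $S_2 \to \cfield P^1$. The remaining content is the purely computational identity that, for two such representatives with Bloch vectors $\mathbf{n}_1,\mathbf{n}_2$,
\begin{equation}
|\hat{x}_1^*\cdot\hat{x}_2|^2 = \half\left(1 + \mathbf{n}_1\cdot\mathbf{n}_2\right).
\end{equation}
The cleanest route is to write the rank-one projectors $\rho_i = \ket{\hat{x}_i}\bra{\hat{x}_i}$ in Pauli form $\rho_i = \half(I + \mathbf{n}_i\cdot\vec{\sigma})$, with $\vec{\sigma}=(\sigma_1,\sigma_2,\sigma_3)$, and evaluate $|\hat{x}_1^*\cdot\hat{x}_2|^2 = \operatorname{Tr}(\rho_1\rho_2)$ using $\operatorname{Tr}I = 2$, $\operatorname{Tr}\sigma_k = 0$, and $\operatorname{Tr}(\sigma_j\sigma_k) = 2\delta_{jk}$; this yields $\half(1+\mathbf{n}_1\cdot\mathbf{n}_2)$ at once.

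Writing $\Theta$ for the angle between $\mathbf{n}_1$ and $\mathbf{n}_2$ we have $\mathbf{n}_1\cdot\mathbf{n}_2 = \cos\Theta$, so the identity reads $|\hat{x}_1^*\cdot\hat{x}_2|^2 = \half(1+\cos\Theta) = \cos^2\half\Theta$. Since the $\mathbf{n}_i$ are precisely the position vectors of the $X_i$ on the unit sphere, $\Theta$ equals the great-circle arc $X_1X_2$, and combining with Proposition \textbf{\ref{maintheorem1}} gives $x_1(x_2) = \cos^2\half(X_1X_2) = \cos^2\half\Theta = |\hat{x}_1^*\cdot\hat{x}_2|^2$, which is (\ref{mainequation2}).

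I do not expect a serious obstacle: the hard part — constructing the sphere and proving it is covered bijectively — was already carried out in Proposition \textbf{\ref{maintheorem1}}, and what remains is the textbook Bloch-sphere computation, with normalizations matching automatically because both objects are unit $2$-spheres joined by an isometry. The one point I would emphasize is that the identification $S_2 \cong \cfield P^1$ is special to this dimension: the same construction at $R = 1$ produces $\rfield P^1$ and at $R = 4$ produces $\qfield P^1$, so it is the \emph{rank} hypothesis $R = 2$, and not the subspace dimension $N = 2$, that singles out the complex field in the Born Rule.
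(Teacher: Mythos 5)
Your proposal is correct and follows essentially the same route as the paper: both reduce to Proposition \textbf{\ref{maintheorem1}} and identify the unit $2$-sphere with $\cfield P^1$ via the map $X(\theta,\phi) \mapsto \hat{x} = (\cos\half\theta,\, e^{i\phi}\sin\half\theta)$, then verify $|\hat{x}_1^*\cdot\hat{x}_2|^2 = \half(1+\xi_1\cdot\xi_2) = \cos^2(\half X_1X_2)$. The only cosmetic difference is that you check this identity by the Pauli-projector trace $\operatorname{Tr}(\rho_1\rho_2)$ where the paper states it directly, which changes nothing of substance.
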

\begin{proof}
 The space $\cfield P^1$ is the linear space over the complex numbers with elements represented by two {\em homogeneous coordinates} i.e.\ pairs 
$\bar{x}  = \{  \alpha_1,\alpha_2,\},$ not both of which are zero, such that two pairs are identified if their elements  differ by a common non-zero complex factor.  
\vsk
There is a one-one correspondence $X \leftrightarrow {\bar x}$ between the points of a unit 2-sphere and the points of  $\cfield P^1$ obtained by stereographic projection, i.e.\ the correspondence
 $X(\theta,\phi) \leftrightarrow \hat{x} = \bar{x}/|x|$ between a point on a unit 2-sphere with zenith $\theta$ and
 azimuth $\phi$  and the point  of $\cfield P^1$ defined by
 \begin{equation}
\hat{x} =  (\cos\half\theta,e^{i\phi}\sin\half\theta).
\end{equation}
Let $\hat{ x}_1$ and $\hat{ x}_2$ correspond to $X_1 = X(\theta_1,\phi_1)$ and $X_2 = X(\theta_2,\phi_2) $, and let
$\xi_1$ and $\xi_2$ be the cartesian components of unit vectors from the center to $X_1$ and $X_2$ respectively. Then
\begin{equation}
|\hat{x}_1^*\cdot \hat{x}_2|^2 = \half(1 + \xi_1 \cdot \xi_2) = \cos^2(\half X_1X_2),
\end{equation}
where $X_1X_2$ is the great circle arc between points $X_1$ and $X_2$.
 \vsk
The assertion then follows from (\ref{bornruleprelim}).
\end{proof}
\begin{comment}
A  similar result is obtained for ranks $R = 1$ and $R = 4$.  In the former case one maps the $1$-sphere to the real projective line $\rfield P^1$, and in the latter one maps the $4$-sphere to the quaternionic projective line $\qfield P^1$. 
\end{comment} 
 \section{Picking rank $R = 2$}\label{pickingout} 
 To complete our axiomatic system we must choose a property of two dimensional quantum mechanical systems which holds only
 for rank $R = 2$. One such property pointed out in \cite{FIV2} is that the group of transformations leaving a basis
 fixed is a continuous, abelian group. However, since our focus in this paper is on information theory, we shall take a different
 approach based on a property noticed by Sykora\cite{SYK} and Wootters\cite{WOOT}:
 \vsk
 Let $x$ be a state at the north pole of the $R$-sphere.  A measurement in the frame $F_{\bf y} = \{y,y\pr\}$
 transforms $x$ into the mixture $ \rho = p y + (1 - p)y\pr$ with $p = \cos^2\half\theta$ where $\theta$ is the
 co-latitude of $y$. We obtain all possible frames by letting $y$ vary over the upper hemisphere. A simple
 measure of the purity of $\rho$ is  $\delta = |2 p - 1| = \cos\theta$ which is twice its distance  from the maximally mixed state in the $d$-metric
 and ranges between $0$ for the maximally mixed state and $1$ for a pure state. 
   The average of any function of the purity $g(\delta)$  (such as the Shannon entropy)  over all frames   will be given by
 \begin{equation}\label{average}
 \overline{g} = { \int_0^{\pi/2}{g(\cos\theta)(\sin\theta)^{R - 1} d\theta} \over
  \int_0^{\pi/2}{(\sin\theta)^{R - 1} d\theta}} =  { \int_0^{1}{g(\delta)(1 - \delta^2)^{(R - 2)/2} d\delta} \over
 \int_0^{1}{(1 - \delta^2)^{(R - 2)/2} d\delta}}.
  \end{equation}
  For $R = 2$ the right side reduces to $\int_0^{1} g(\delta)d\delta$. This is quite remarkable, for what it says is that
  for $R = 2$ and {\em only} for $R = 2$ the purity $\delta$ is {\em uniformly} distributed on the interval $[0,1]$. 
  For $R = 1$ the negative power of $(1 - \delta^2)$ means that the distribution is weighted towards the pure state value $\delta = 1$
  whereas for $R > 2$ the positive power of $(1 - \delta^2)$ means that  it is weighted towards the maximally mixed state $\delta = 0$.
  \vsk
   This suggests the following choice for our last 
  axiom which picks out $R = 2$:
  \vskip.2in
  \noindent
   {\bf Axiom V: }
   {\em The purities of the mixed states resulting from  random measurements of
  a pure qubit state are uniformly distributed.
   }
  \vskip.2in
  \noindent
 As  noted above,  the appropriate measure of the removal of uncertainty when a measurement is made is the
 information entropy which
takes account of the lack of pre-knowledge of what measurement is going to be made and requires that we average
 the Shannon entropy over all possible measurements. The effect 
of the weighting described above in the real case ($R =1$) and quaternion case ($R = 4$) is that the information entropy will be smaller than in the complex case for real quantum mechanics and larger for quaternionic quantum mechanics.   We verify this by letting $g(\delta)$ be the Shannon entropy
 \begin{equation}
 g(\delta) = - (\half + \delta)\ln(\half + \delta) - (\half - \delta)\ln(\half - \delta)
 \end{equation}
 and obtain the values $2 \ln2 - 1<  1/2 < 7/12$ for $R = 1,2,4$ respectively.
   \vsk
 Given this result we might have considered replacing Axiom V by  elevating to axiomatic status the assertion that the information entropy is $1/2$ for a pure qubit state.
 While this ``works" to pick out $R = 2$, the fact that the value is $1/2$ depends  on the choice of the
 e-base of logarithms in defining the entropy and has no instrinsic physical significance.${\bf ^\dagger}$  In fact one  could just as well have chosen the average of any function of the purity $\delta$  that differs  for different values of $R$.
   Axiom V as stated avoids this .
   \vsk
   \begin{comment}
   The property of a two dimensional complex Hilbert space expressed by Axiom V is a special case of a
   property of all finite dimensional complex Hilbert spaces  proved by Sykora in the appendix to \cite{SYK}, namely that 
    if  $F_{\bf y} x = \sum_{j = 1}^N p_j y_j$ denotes a mixed state
    resulting from a random measurement of a pure state $x$  in a complex Hilbert space of dimension $N$,
   then $p = \{p_1,\cdots,p_N\}$ is uniformly distributed on the ``probability simplex"  $\sum_{j=1}^N p_j = 1$.
  Sykora further remarks that a different result is obtained for a real Hilbert space. Since we shall deduce the
  CMQM for $N > 2$ below from the $N = 2$ case, we only needed the fact that Axioms I-IV implied the
  $R$-sphere structure for $N = 2$ where the points of the probability simplex are linearly related to the purity $\delta$.
 Thus  we could easily deduce the formula  (\ref{average}) and from it see the uniqueness of the complex
 case in producing a uniform distribution. 
 \end{comment}
     \section{Extension to $N > 2 $}
It has now been shown that Axioms I-V imply that the qbit subspaces of $\sys$ are $\cfield P^1$ spaces 
 in which the Born Rule holds. In this section it will be shown that this extends to all of $\sys$ to produce the CRQM.
\vsk
   To carry this out
 we must show that the $\cfield P^1$ coordinatization established for qbits enables us to establish a $\cfield P^{N-1}$ coordinatization for $N$ dimensional spaces consistent with the Born Rule. To accomplish this we first show that  Axioms I-V  imply that $\sys$ is a
 projective geometry.  While it is intuitively clear that if the two dimensional subspaces of an $N$-dimensional projective space are $\cfield P^1$ spaces the space itself must be a $\cfield P^{N-1}$ space,  it is not obvious that a single coordinatization of the space can be carried out in such a way that the Born Rule holds simultaneously  in every two dimensional subspace. This will be verified by first
 showing that  there is a coordinatization such that $x(y) = |\hat{x}^*\cdot\hat{y}|^2$ when at least one of the two elements $x,y$
 lies on an axis and then showing that it extends to arbitrary pairs. 
 \vsk
 \vsk
 --------------------------------------------------------------------
 
 \noindent
 $\dagger$ I am indebted to W. Wootters for this observation.
 \eject

    \begin{lemma}\label{projectivegeomety}
       $ {\cal S}$ is a projective geometry in which the elements  are points,
the two-dimensional subspaces are  lines, and the three dimensional subspaces are planes.
 \end{lemma}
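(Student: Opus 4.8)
The plan is to verify the synthetic (Veblen--Young) alignment axioms of projective geometry for the incidence system whose points are the elements of $\sys$, whose lines are the two-dimensional subspaces, and whose planes are the three-dimensional subspaces. Two of the three alignment axioms are already in hand: any two distinct points lie on a unique line by Proposition \ref{uniquesubspace}, and every line carries more than two points because, by Proposition \ref{maintheorem1} together with Axiom V, a line is isometric to a $2$-sphere and hence is infinite. The entire remaining content reduces to the projective Pasch axiom: given a triangle $a,b,c$ (three non-collinear points) with a point $d$ on the line $\mathcal{P}_{bc}$ and a point $e$ on the line $\mathcal{P}_{ca}$, the line $\mathcal{P}_{de}$ must meet the line $\mathcal{P}_{ab}$ in a point $f$.

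First I would assemble the lattice-theoretic scaffolding. Writing $T^{\perp}$ for the set of elements orthogonal to every member of $T$, one checks from Proposition \ref{anyframespans} and Proposition \ref{anyorthog} that the subspaces are precisely the biorthogonally closed sets $T=(T^{\perp})^{\perp}$, that intersections of subspaces are subspaces (so a smallest subspace containing a given set exists), and that all frames of a subspace share one cardinality, its dimension. The one structural fact I must pin down is the \emph{exchange property}: adjoining to an $n$-dimensional subspace a point outside it produces an $(n+1)$-dimensional subspace. In particular this forces the span of three non-collinear points to be exactly three-dimensional, so that $a,b,c,d,e$ and both lines $\mathcal{P}_{ab}$, $\mathcal{P}_{de}$ lie inside a single three-dimensional plane $\mathcal{T}$.

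The crux is then to show that two distinct lines inside a three-dimensional subspace $\mathcal{T}$ always meet, which I would obtain from the polarity internal to $\mathcal{T}$. Given a point $p\in\mathcal{T}$, extend $\{p\}$ to a frame $\{p,u,v\}$ of $\mathcal{T}$ (an orthogonal set lies in a maximal one, of size three by Proposition \ref{anyframespans}); then $z\in\mathcal{T}$ satisfies $z\perp p$ iff $u(z)+v(z)=1$, so $p^{\perp}\cap\mathcal{T}=\mathcal{P}_{uv}$ is a line, and conversely each line of $\mathcal{T}$ equals $p^{\perp}\cap\mathcal{T}$ for a unique pole $p$, found by extending a frame of the line to a frame of $\mathcal{T}$. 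Now take distinct lines $\mathcal{P}_1=p_1^{\perp}\cap\mathcal{T}$ and $\mathcal{P}_2=p_2^{\perp}\cap\mathcal{T}$ with $p_1\neq p_2$, choose a frame $\{u_1,u_2\}$ of $\mathcal{P}_{p_1p_2}\subseteq\mathcal{T}$, and extend it to a frame $\{u_1,u_2,q\}$ of $\mathcal{T}$. Then $q\perp u_1,u_2$, so by Proposition \ref{anyorthogonal} $q\perp p_1,p_2$, giving $q\in\mathcal{P}_1\cap\mathcal{P}_2$; by Corollary \ref{onepointincommon} this intersection is the single point $q$. Applied to the coplanar lines $\mathcal{P}_{ab}$ and $\mathcal{P}_{de}$, which are distinct whenever the triangle is non-degenerate, this yields the required point $f$ and establishes the Pasch axiom, hence the lemma.

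The hard part will be the exchange/dimension bookkeeping of the second paragraph, which underwrites the claim that three non-collinear points span a genuinely three-dimensional plane; higher-dimensional spans would let coplanar-looking lines fail to meet. The duality argument of the third paragraph is clean once that is in place, because it converts \emph{two lines meet} into \emph{a point is orthogonal to a line}, exactly the content of Propositions \ref{anyorthog} and \ref{anyorthogonal}. By contrast, controlling dimension under joins is not immediate from the orthogonality axioms alone and must be argued from the frame-extension property and the completeness guaranteed by Axiom II.
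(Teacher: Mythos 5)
Your third paragraph is correct, and it is in fact a cleaner, coordinate-free version of what the paper does: the paper's Figure~3 argument constructs the common point of two lines in a three-dimensional subspace by hand (extending $x,x^\prime$ and $a,a^\prime$ to bases $x,x^\prime,x^{\prime\prime}$ and $a,a^\prime,a^{\prime\prime}$, then locating $x^{*\prime}$ on both lines), whereas your polarity $p \leftrightarrow p^{\perp}\cap{\cal T}$, with the pole $q$ of ${\cal P}_{p_1p_2}$ serving as the intersection point, packages the same orthogonality bookkeeping abstractly via Propositions \ref{anyorthog} and \ref{anyorthogonal}. The genuine gap is in your second paragraph, and you name it yourself: everything hinges on the claim that the whole Pasch configuration lies in one three-dimensional subspace, which you reduce to an unproven ``exchange property'' sitting on top of an unproven lattice scaffolding. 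None of that scaffolding is available at this stage of the paper: it is not established that subspaces are biorthogonally closed, that intersections of subspaces are subspaces, or that a smallest subspace containing three non-collinear points even exists, so there is as yet no join operation whose dimension ``frame extension plus Axiom II'' could control. Deferring the hardest step is not a proof, and this step is precisely the nontrivial content of the lemma.

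The gap can be closed without any exchange machinery, and this is exactly what the paper does. You only need that \emph{two intersecting lines lie in a common three-dimensional subspace}: in your Pasch configuration the lines ${\cal P}_{bc}$ and ${\cal P}_{ca}$ already meet at $c$, after which $a,b,d,e$ all lie in that subspace, and so do ${\cal P}_{ab}$ and ${\cal P}_{de}$ (a line through two points of a subspace lies in the subspace --- an easy consequence of evaluating Axiom IV on a frame of the subspace and extending $c,c^\prime$ to such a frame, a small fact you also use tacitly when you write ${\cal P}_{p_1p_2}\subseteq{\cal T}$). The paper proves the needed containment by an explicit antipode construction: with $y$ the intersection point, $y^\prime$ its antipode in one line, $y^*$ its antipode in the other, and $y^{\prime\prime}$ the antipode of $y^\prime$ in ${\cal P}_{y^\prime y^*}$, the frame-function identities of Propositions \ref{anyorthog} and \ref{anyframespans} yield $y(z)+y^\prime(z)+y^{\prime\prime}(z)=1$ for every $z$ on either line, i.e.\ both lines lie in the three-dimensional subspace spanned by $\{y,y^\prime,y^{\prime\prime}\}$. (Minimality or uniqueness of this subspace is never needed --- containment suffices, since Corollary \ref{onepointincommon} already controls the intersection.) With that construction substituted for your second paragraph, your duality argument completes the proof; as written, however, the proposal establishes Pasch only modulo its own hardest step. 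A side remark: invoking Proposition \ref{maintheorem1} and Axiom V to see that lines carry more than two points is unnecessary --- the midpoint $c(a,b)$ of Axiom IV already provides a third point on ${\cal P}_{ab}$.
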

  \begin{proof}
 We must show that:\vsk
  (I) Each pair of distinct points are on exactly one line;   (II) The Veblen-Young axiom holds (see statement below).
  \vsk
 By proposition {\bf \ref{uniquesubspace}}  two distinct points $a,b$ determine exactly one
two-dimensional subspace ${\cal P}_{ab}$.  We must therefore prove (II) which states that
 if $a,b,c,d$ are four  points
 no three of which are colinear, and if the lines ${\cal P}_{ab}$ and ${\cal P}_{cd}$
 intersect, then the lines ${\cal P}_{ac}$ and ${\cal P}_{bd}$ intersect. This is an
 efficient way of asserting that every pair of lines in the plane determined by a pair of intersecting lines will intersect.
\vsk
 We first prove that if  ${\cal P}_{ab}$ and ${\cal P}_{cd}$ intersect they lie in a three-dimensional subspace:
 \vsk
{\centering{\hskip1.0in\parbox{3cm}{\includegraphics[width= 2.75in]{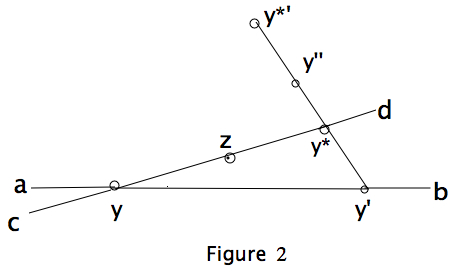} }}}
\vsk
 Refer to Figure 2. 
 Let ${\cal P}_{ab}$ and ${\cal P}_{cd}$ intersect at $y$. Let  $y^\prime$ be 
 its antipode in ${\cal P}_{ab}$, and let $y^*$ be its antipode in ${\cal P}_{cd}$.
 Let $y^{\prime \prime}$ be the antipode of $y^\prime$ in ${\cal P}_{y^\prime y^*}$
  and let $y^{* \prime}$ be the antipode of $y^*$ in ${\cal P}_{y^\prime y^*}$
 Since $y \perp y^\prime$ and $y \perp y^*$ it follows from proposition {\bf \ref{anyorthog}} that $y$ is orthogonal to both  $y^{\prime\prime}$
 and $y^{*\prime}$.
Hence both $\{y,y^\prime,y^{\prime\prime}\} $ and $\{y,y^*,y^{*\prime}\}$ are orthogonal sets . Now let $z$ be any point
on ${\cal P}_{cd}$.
 We have $p(y,z) + p(y^*,z) = 1$ whence
 $p(y^{*\prime},z) = 0$. But since both $\{y^\prime,y^{\prime\prime}\}$ and $\{y^*,y^{*\prime}\}$ are bases of ${\cal P}_{y^\prime y^*}$ it follows from proposition {\bf \ref{anyframespans}} that $p(y^\prime,z) +p(y^{\prime\prime},z) =
p(y^*,z) + p(y^{*\prime},z)$. Hence $p(y^*,z) = p(y^\prime,z) +p(y^{\prime\prime},z)$ whence
$p(y,z) + p(y^\prime,z) +p(y^{\prime\prime},z) = 1.$  Hence $z$ belongs to the three dimensional subspace spanned by  $ \{y,y^\prime,y^{\prime\prime}\}.$ Since any point on ${\cal P}_{ab}$
also lies in this space the assertion follows.
\vsk
Since ${\cal P}_{ab}$ and ${\cal P}_{cd}$ lie in a three-dimensional subspace,
the lines ${\cal P}_{ac}$ and ${\cal P}_{bd}$ lie in that  subspace, and
we can therefore
 complete the proof of the Veblen-Young axiom by showing  that any two 
 distinct lines in a three dimensional subspace intersect.
 \vskip.2in
{\centering{\hskip1.5in\parbox{3cm}{\includegraphics[width= 1.5in]{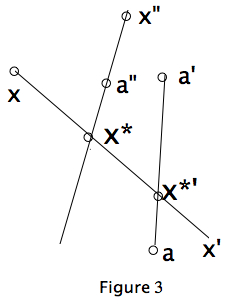} }}}
  \vskip.2in
  \noindent
 See Figure 3. With no loss in generality we can label the lines 
 ${\cal P}_{aa^\prime}$ and  ${\cal P}_{xx^\prime}$. Since $N=3$ there are unique elements $a^{\prime\prime}$ and $x^{\prime\prime}$ such that $a,a^\prime,a^{\prime\prime}$ and
$x,x^\prime,x^{\prime\prime}$ are bases of ${\cal S}$. The line
${\cal P}_{x^{\prime\prime}a^{\prime\prime}}$ contains an element
$x^*$ which is orthogonal to $x^{\prime\prime}$ and hence
lies in ${\cal P}_{xx^\prime}$. Its antipode $x^{*\prime}$ 
is orthogonal to both $x^*$ and $x^{\prime\prime}$ and
hence is orthogonal to ${\cal P}_{x^* x^{\prime\prime}}$ which is identical to  ${\cal P}_{x^{\prime\prime}a^{\prime\prime}}$. Hence it is orthogonal to $a^{\prime\prime}$ so that  $x^{*\prime}$ 
lies in ${\cal P}_{aa^\prime}$ as well as in ${\cal P}_{xx^\prime}$. This establishes the Veblen-Young axiom and completes the proof.
\end{proof}
\vsk
\begin{definition}
Two lines  in a plane $\sys^*$ of $\sys$ are said to be {\em normal}  if the antipode of the intersection in one line
is  orthogonal to its antipode in the other. 
\end{definition}
\noindent
\begin{lemma}\label{nearestpoint}
Let  $z \in {\cal P}$ and $b \in {\cal Q}$  where  ${\cal P}$ and ${\cal Q}$ are normal lines with intersection $x$. 
Then $b(z)$ has a maximum at $b = x$.
\end{lemma}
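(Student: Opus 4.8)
The plan is to avoid computing $b(z)$ in closed form and instead to bound it using the additivity that Axiom III forces among the values of a state on a frame of the three–dimensional plane $\sys^*$ that contains ${\cal P}$ and ${\cal Q}$. Since $d(a,b)=\sqrt{1-a(b)}$, maximizing $b(z)$ over ${\cal Q}$ is the same as finding the point of ${\cal Q}$ closest to $z$, and the whole statement reduces to the single inequality $b(z)\le x(z)$ for every $b\in{\cal Q}$, with equality at $b=x$. I expect this inequality to drop out of one application of the frame–function property once the correct orthogonal frame of $\sys^*$ has been assembled.

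First I would fix the relevant frame of the plane. Let $p$ be the antipode of $x$ in ${\cal P}$ and $q$ the antipode of $x$ in ${\cal Q}$ (each unique by Corollary \ref{uniqueantipode}). Then $x\perp p$ and $x\perp q$ by construction, while the normality hypothesis is exactly $p\perp q$; hence $\{x,p,q\}$ is a set of three mutually orthogonal elements of the three–dimensional plane $\sys^*$, and since every frame of $\sys^*$ has three elements (Proposition \ref{anyframespans}) it is already a frame of $\sys^*$. From Proposition \ref{anyorthog} I would then extract two facts: because $q$ is orthogonal to the frame $\{x,p\}$ of ${\cal P}$ it is orthogonal to all of ${\cal P}$, so $z(q)=0$; and because $p$ is orthogonal to the frame $\{x,q\}$ of ${\cal Q}$ it is orthogonal to all of ${\cal Q}$, so $p(b)=p(b\pr)=0$ for any $b\in{\cal Q}$ and its antipode $b\pr$ in ${\cal Q}$.

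Next I would produce the additivity identity. For arbitrary $b\in{\cal Q}$ let $b\pr$ be its antipode in ${\cal Q}$, so $\{b,b\pr\}$ is a frame of ${\cal Q}$; since $p\perp b$ and $p\perp b\pr$, the triple $\{b,b\pr,p\}$ is again three mutually orthogonal elements of $\sys^*$, hence a frame of $\sys^*$. Applying the frame–function property to the state $z\in\sys^*$ on this frame and on the frame $\{x,p\}$ of ${\cal P}$ (the common–value–unity clause of \ref{allframesequivalent}, together with the symmetry of $p$ from Axiom I) gives $z(b)+z(b\pr)+z(p)=1$ and $z(x)+z(p)=1$; subtracting yields $z(b)+z(b\pr)=z(x)$. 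The conclusion is then immediate, and this really is the whole content: since $z(b\pr)\ge 0$ we get $b(z)=z(b)\le z(x)=x(z)$ for every $b\in{\cal Q}$, while at $b=x$ one has $b\pr=q$ and $z(q)=0$, so equality holds and the bound $z(x)$ is attained. Thus $b(z)$ attains its maximum over ${\cal Q}$ at $b=x$.

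The genuine obstacle is a temptation rather than a computation: one is inclined to prove the sharper formula $b(z)=x(z)\,z(x)$ by a Born–rule calculation in $\sys^*$ or by a rotational symmetry of ${\cal Q}$ about the $x$–$q$ axis, but neither tool is legitimately available here. The Born rule for $N>2$ is precisely what this section is working toward, and frame functions on a single qubit line ($\cfield P^1$) are \emph{not} pinned down to affine functions, since Gleason's theorem fails for $N=2$; so symmetry alone cannot locate the maximum. The decisive move that bypasses all of this is the verification that $\{b,b\pr,p\}$ is an honest frame of the plane — which is exactly where normality enters, through $p\perp q$ and hence $p\perp{\cal Q}$ via Proposition \ref{anyorthog}. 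After that step, mere nonnegativity of $z(b\pr)$ finishes the argument.
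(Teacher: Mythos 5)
Your proof is correct, and it takes a genuinely different route from the paper's. The paper proves this lemma with Axiom IV machinery: it relaxes the identity (\ref{tvnproperty3}) to the inequality (\ref{lessthanoreq}) via $c(z)+c\pr(z)\le 1$ (equality if and only if $z\in{\cal P}_{ab}$), uses normality to make the auxiliary point $a$ orthogonal to the whole of the other line so that $\theta(a,b)=\pi$ and the right side vanishes, and then concludes by a short trigonometric argument that $\theta(z,b)\ge\pi-\theta(a,z)$ with equality exactly at $b=x$. You never touch Axiom IV or the sphere structure of Proposition \ref{maintheorem1}: after observing that normality makes $\{x,p,q\}$ mutually orthogonal (with $p,q$ the antipodes of $x$ in ${\cal P},{\cal Q}$), you get $q\perp{\cal P}$ and $p\perp{\cal Q}$ from Proposition \ref{anyorthog}, and pure frame bookkeeping yields the identity $b(z)+b\pr(z)=x(z)$, whence $b(z)\le x(z)$ with equality at $b=x$ because there $b\pr=q$ and $q(z)=0$. (You could even shorten your own argument: applying (\ref{allframesequivalent}) directly to the two frames $\{b,b\pr\}$ and $\{x,q\}$ of the line ${\cal Q}$ gives $b(z)+b\pr(z)=x(z)+q(z)=x(z)$ in one step, with no need to assemble frames of the three-dimensional plane.) Your version buys several things the paper's does not: it uses only Axioms I--III (plus finiteness from Axiom II for extending orthogonal sets to frames), it identifies the maximum value as exactly $x(z)$, and it shows the maximum is attained uniquely at $b=x$ except in the degenerate case $z=p$, since equality at $b\ne x$ would give $z\perp b\pr$ and $z\perp q$, hence $z\perp{\cal Q}$ by Proposition \ref{anyorthogonal}. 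What the paper's trigonometric route buys is coherence with the angular formalism it needs immediately afterward in Lemma \ref{ppptheorem}. Two trivial blemishes in your write-up, neither affecting validity: the ``sharper formula'' you warn against should read $b(z)=b(x)\,x(z)$ (the content of Lemma \ref{ppptheorem}), not $b(z)=x(z)\,z(x)$; and your opening appeal to $d(a,b)=\sqrt{1-a(b)}$ is motivation only and plays no role in the actual argument.
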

\begin{proof}
See Figure 4
\vsk
 {\centering\hskip1in{\parbox{2.5cm}{\includegraphics[width=2in]{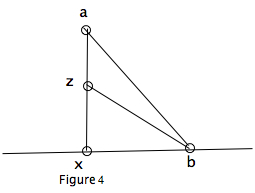} }}  }\vsk
 Recall  equation (\ref{tvnproperty3})
 $$
 \cos \theta(a,z) + \cos \theta(b,z) = 2\cos\half \theta(a,b)\cos\theta(c,z),
$$
which was derived from (\ref{tvnproperty2}) under the assumption $z \in {\cal P}_{ab}$ which is spanned by $c,c\pr$ so that
 $c(z) +
c'(z) = 1.$ Here, however, $z$ is only in ${\cal P}_{ab}$ when $b = x$, but since for arbitrary $z$ we have  $c(z) + c\pr(z) \leq 1$ we can replace the equation  by  the inequality:
 \begin{equation}\label{lessthanoreq}
 \cos\theta(a,z) + \cos \theta(z,b) \leq 2\cos\half \theta(a,b)\cos\theta(c,z)
\end{equation}
with equality if and only if $z \in {\cal P}_{ab}$.
Since ${\cal P}$ and ${\cal Q}$ are normal, $a$ is orthogonal both to $x$ and to the antipode of $x$ on ${\cal P}$ (not shown in the figure) and hence
by proposition {\bf{\ref{anyorthog}}}  is orthogonal to every point of ${\cal P}$. Hence $\theta(a,b) = \pi$ so that the right side of (\ref{lessthanoreq}) is zero. Let $\theta(a,z) + \theta(z,b) =  \pi - \epsilon$ so that (\ref{lessthanoreq}) becomes
\begin{equation}\label{atpi}
\cos\theta(a,z) \leq  \cos( \theta(a,z) + \epsilon)
\end{equation}
with equality if and only if $z \in {\cal P}_{ab}$ which occurs for $b = x$, i.e.\ $\epsilon = 0$. 
For small, non-zero $|\epsilon| $ equation (\ref{atpi}) implies   $\epsilon < 0$,
 so that $\theta(a,z) + \theta(z,b) \geq  \pi $. Hence $\theta(z,b)$ has a minimum  at $b = x$ and hence 
 $b(z) = \cos^2\half\theta(b,z)$ has a maximum.
 \noindent
\end{proof}
\begin{lemma}\label{ppptheorem}
Let $c$ be the intersection of a pair of normal lines ${\cal P}$ and ${\cal Q}$ in a plane $\sys^*$. If $a\in {\cal P}$ and
$z \in {\cal Q}$, then
\begin{equation}\label{pppequation}
a(z) = a(c)c(z) .
\end{equation}
\end{lemma}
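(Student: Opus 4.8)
The plan is to pin the geometry of the plane $\sys^*$ down to a frame, reduce the claim to one reflection symmetry, and read off the product from the \emph{exact} (not merely the inequality) form of Axiom IV. First I would let $p$ be the antipode of $c$ in ${\cal P}$ and $q$ the antipode of $c$ in ${\cal Q}$. Normality of ${\cal P},{\cal Q}$ means $p\perp q$, and together with $c\perp p$ and $c\perp q$ this makes $\{c,p,q\}$ a mutually orthogonal set, hence a frame of the three-dimensional subspace $\sys^*$. By Proposition \ref{anyframespans} every $w\in\sys^*$ obeys $w(c)+w(p)+w(q)=1$. Applying this to $a\in{\cal P}={\cal P}_{cp}$, where $a(c)+a(p)=1$, forces $a(q)=0$; applying it to $z\in{\cal Q}={\cal P}_{cq}$ forces $z(p)=0$. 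Thus $a\perp q$ and $z\perp p$. Since $q$ is orthogonal to the two points $c,p$ spanning ${\cal P}$, Proposition \ref{anyorthogonal} gives $q\perp{\cal P}$, so $\{a,a\pr,q\}$ (with $a\pr$ the antipode of $a$ in ${\cal P}$) is also a frame, yielding the linear relation $a(z)+a\pr(z)=c(z)$.

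Next I would use Axiom IV in the form that is exact at every argument. Equation (\ref{tvnproperty1}) is an identity of frame functions, so $\half(a+b)=\lambda\,c(a,b)+(1-\lambda)\,c\pr(a,b)$ holds when evaluated at \emph{any} element, not only at elements of the spanning subspace. I may assume $a(c)\ge\half$ (otherwise run the argument for $a\pr$ and recover $a$ from $a(z)+a\pr(z)=c(z)$). Let $\tilde a$ be the reflection of $a$ through $c$ inside the sphere ${\cal P}$, i.e.\ the point of ${\cal P}$ whose spherical midpoint with $a$ is $c$. By Proposition \ref{cunique} the pair solving (\ref{tvnproperty1}) for $(a,\tilde a)$ is unique, and since $a,\tilde a$ lie on ${\cal P}$ with midpoint $c$ it is exactly $(c,p)$; equation (\ref{lambdaeq}) then gives $\lambda(a,\tilde a)=a(c)$. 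Hence, as frame functions, $\half(a+\tilde a)=a(c)\,c+(1-a(c))\,p$. Evaluating at $z$ and using $p(z)=0$ collapses the $p$-term and leaves $a(z)+\tilde a(z)=2\,a(c)\,c(z)$. Consequently the lemma is \emph{equivalent} to the single symmetry statement $\tilde a(z)=a(z)$.

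The hard part will be exactly this reflection symmetry. The linear data assembled above — the frame relations plus the exact Axiom IV identity — determine only the sum $a(z)+\tilde a(z)$ and are blind to the ``longitude'' of $a$ on the sphere ${\cal P}$; it is precisely this degeneracy that $\tilde a(z)=a(z)$ must break, so the round-sphere rigidity of Proposition \ref{maintheorem1} (and not merely the projective or metric structure) has to be invoked. I would argue that the rotations of the $2$-sphere ${\cal P}$ about its polar axis $\{c,p\}$ are isometries carrying $a$ through its entire circle of latitude $\{t\in{\cal P}:t(c)=a(c)\}$, taking $a$ to $\tilde a$ at rotation angle $\pi$ while fixing $c$ and $p$; the key point is that such a rotation also fixes $z$, because $z\perp p$ puts $z$ on the rotation axis and lets ${\cal P}$ meet $z$ only through the $c$-direction. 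Invariance of the $p$-table under this symmetry then gives $\tilde a(z)=a(z)$, and with the displayed identity $a(z)=a(c)c(z)$. Equivalently, the delicate step to nail down is that $t\mapsto t(z)$ is constant on each latitude circle of ${\cal P}$ — i.e.\ that $z\perp p$ forces $z$ onto the axis of those circles — which is where the established two-dimensional geometry does the real work.
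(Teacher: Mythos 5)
Your reduction is correct, and up to the last step it coincides with the paper's own proof: your $p$, $q$, $\tilde a$ are the paper's $c\pr$, $c''$, $b$, your observation that $q\perp{\cal P}$ and $z\perp p$ follows the same route through Propositions \textbf{\ref{anyorthog}}--\textbf{\ref{anyorthogonal}}, and your identity $a(z)+\tilde a(z)=2a(c)c(z)$ — obtained from the exactness of (\ref{tvnproperty1}) at every argument together with $c\pr(z)=0$ — is precisely the paper's equation (\ref{pluseq}). You have also correctly diagnosed that everything hinges on the single statement $\tilde a(z)=a(z)$, i.e.\ that $t\mapsto t(z)$ is constant on latitude circles of ${\cal P}$.

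But your proof of that statement is a genuine gap, not a delicate detail. The rotations you invoke are, on the strength of Proposition \textbf{\ref{maintheorem1}}, only self-isometries of the metric space $({\cal P},d)$: that proposition fixes the \emph{intrinsic} geometry of each two-dimensional subspace and says nothing about the values $t(z)$ for $z\notin{\cal P}$. There is no ambient space containing both the sphere ${\cal P}$ and the point $z\in{\cal Q}$, so ``$z\perp p$ puts $z$ on the rotation axis'' has no meaning — $z$ is not a point of ${\cal P}$ at all — and no axiom supplies the homogeneity needed to extend a latitude rotation of ${\cal P}$ to an automorphism of the $p$-table fixing $z$ (in the CRQM such extensions are unitaries, but that is exactly what is being derived, so assuming them is circular). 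The paper closes this gap variationally rather than by symmetry: since $c(z)+c\pr(z)\leq 1$ for $z$ outside ${\cal P}_{cc\pr}$, the exact relation (\ref{tvnproperty2}) degrades to the inequality (\ref{lessthanoreq}), from which Lemma \textbf{\ref{nearestpoint}} shows $a(z)$ is maximized over $a\in{\cal P}$ at the intersection $a=c$. Then, varying the pair $a,\tilde a$ along a great circle through $c$ while the midpoints $c$ and $e$ stay fixed, the sum (\ref{pluseq}) and difference (\ref{differenceequation}) relations combine into (\ref{sumeq}), $a(z)=\half K(z)\sin\half\theta+\cos^2\quarter\theta\,c(z)$, and maximality at $\theta=0$ forces the odd term to vanish, $K(z)=0$ — which is exactly your latitude constancy. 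Your argument becomes a complete proof if you replace the symmetry appeal by this maximum argument; as written, the one hard step is asserted rather than proved.
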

\begin{proof}
\vsk
 {\centering\hskip1in{\parbox{2.5cm}{\includegraphics[width=2.5in]{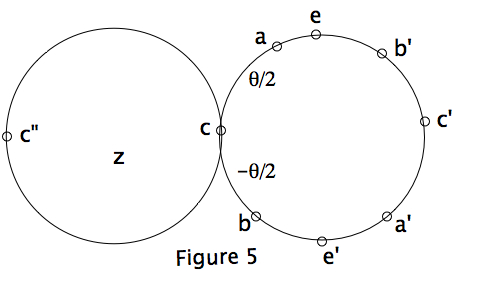} }}  }

 \vsk
Figure 5 shows a pair of two dimensional subspaces ${\cal P}$ and ${\cal Q}$ with intersection $c$  that are normal to one another. More precisely it
shows the images of these subspaces under the BRC mapping to unit $2$-spheres.
The antipodes of  $c$ on ${\cal P}$ and ${\cal Q}$ are  $c'$ and  $c''$ respectively.  $z$ is an arbitrary point of ${\cal Q}$. Since $c\pr$ is orthogonal
both to $c$ and $c''$ it follows from proposition {\bf{\ref{anyorthogonal}}} that $c\pr(z) = 0$.  Hence (\ref{tvnproperty1}) becomes
\begin{equation}\label{pluseq}
a(z) + b(z) = 2 \lambda c(z) \text{ where } \lambda = a(c) = \cos^2\textstyle{1\over 4}\theta.
\end{equation}
where $\theta$ is the angle between the images of $a$ and $b$ which lie on a great circle through $c$.  By varying the angle $\theta$ we can vary $a$ and $b$ on the circle while holding fixed both the midpoint $c$ of $a$ and $b$ and the midpoint $e$ of $a$ and $b\pr$.
Thus (\ref{differenceequation}) can be written
\begin{equation}\label{minuseq}
a(z) - b(z) =  (2\gamma - 1)K(z), \text{ where  }\gamma =  \half(1 + \sqrt{a(b\pr)}) = \half(1 + \sin\half\theta),
\end{equation}
and $$K(z) = e(z) - e\pr(z)$$
is independent of $\theta$.
Adding (\ref{pluseq}) and (\ref{minuseq}):
\begin{equation}\label{sumeq}
a(z) = \half K(z)\sin\half\theta + c(z) \cos^2\textstyle{1\over 4}\theta.
\end{equation}
By Lemma {\bf{\ref{nearestpoint}}} $a(z)$ has a maximum when $a$ is at the intersection point, i.e.\ at  $\theta = 0$. It follows
that $K(z) = 0$ and hence that $a(z) = b(z)$  whence from (\ref{pluseq}) and (\ref{lambdaeq}) we have
\begin{equation}
a(z) = a(c)c(z).
\end{equation}
\end{proof}
 \begin{definition}
 Let $\sys$ have dimension $N$, and let ${\bf a} = \{a_1,\cdots,a_N\}$ be a basis..  The lines $q_{ij} \equiv {\cal P}_{a_i a_j}$ with $i \neq j$ are referred to as the {\em axes} of the ${\bf a} $-basis. The set $\Omega({\bf a})$ is the set of points belonging to all of the axes of the ${\bf a}$-basis.
 \end{definition}
 \begin{lemma}\label{pointsonaxis}
  Let $\sys$ have dimension $N$, and let ${\bf a} = \{a_1,\cdots,a_N\}$ be a basis. Let $\overline{S}$ be a $\cfield P^{N-1}$ space the elements of which are represented by $N$ component homogeneous complex coordinate vectors ${\bar \xi} = (\xi_1,\xi_2,\cdots,\xi_{N })$.
 There is a mapping $z \to \bar{z}$  of  the elements of $\sys$ to $\overline{S}$ such that the Born Rule holds between pairs $z_1,z_2$ provided that at least one member belongs to $\Omega({\bf a})$.
 \end{lemma}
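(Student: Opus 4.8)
The plan is to build the map $z\to\bar z$ by anchoring it on the reference basis and then to verify the Born Rule through the product rule of Lemma~\ref{ppptheorem}. First I would set $\bar a_k$ equal to the coordinate vector with components $(\bar a_k)_l=\delta_{kl}$, so that among the basis elements the Born Rule reduces to the orthogonality $a_i(a_j)=\delta_{ij}$ supplied by Axiom III. For an arbitrary $z$ I fix the moduli of its homogeneous coordinates by $|(\hat z)_k|^2=a_k(z)$; since $\{a_1,\dots,a_N\}$ is a frame, $\sum_k a_k(z)=1$, matching a normalized vector. The relative phases are then pinned down on the axes through the single vertex $a_1$: inside the $\cfield P^1$ structure of each $q_{1k}$, which exists by Proposition~\ref{maintheorem2}, I select the two equatorial reference states mapping to $\hat a_1+\hat a_k$ and to $\hat a_1+i\hat a_k$, and the probabilities of $z$ against them return the real and imaginary parts of $(\hat z)_1^{*}(\hat z)_k$, hence the phase of $(\hat z)_k$ relative to $(\hat z)_1$. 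This produces a well-defined $\bar z\in\cfield P^{N-1}$ for every $z$.

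The verification rests on dropping a perpendicular. Given $z_1\in q_{ij}\subseteq\Omega({\bf a})$ and an arbitrary $z_2$, the case $z_2\in q_{ij}$ is Proposition~\ref{maintheorem2}, so assume $z_2\notin q_{ij}$. I would take a line ${\cal Q}$ through $z_2$ normal to $q_{ij}$ in the three-dimensional subspace they span; by Lemma~\ref{projectivegeomety} it meets $q_{ij}$ in a point $c$, and by Lemma~\ref{nearestpoint} this $c$ maximizes $b(z_2)$ over $b\in q_{ij}$. Lemma~\ref{ppptheorem} then yields, for every $z_1\in q_{ij}$,
\begin{equation}
z_1(z_2)=z_1(c)\,c(z_2).
\end{equation}
The first factor is an intra-line quantity, so by Proposition~\ref{maintheorem2} one has $z_1(c)=|\hat z_1^{\,*}\!\cdot\hat c|^2$, a gauge-invariant quantity depending only on the intrinsic $\cfield P^1$ structure of $q_{ij}$. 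Taking $z_1=a_i$ and $z_1=a_j$, adding, and using $a_i(c)+a_j(c)=1$ because $\{a_i,a_j\}$ is a frame of $q_{ij}$, fixes the scalar weight as
\begin{equation}
c(z_2)=a_i(z_2)+a_j(z_2)=|(\hat z_2)_i|^2+|(\hat z_2)_j|^2 .
\end{equation}
Thus the entire $z_1$-dependence of $z_1(z_2)$ is carried by $|\hat z_1^{\,*}\!\cdot\hat c|^2$, exactly as the target $|\hat z_1^{\,*}\!\cdot\hat z_2|^2$ requires, provided $\hat c$ equals the normalized projection of $\hat z_2$ onto the $2$-plane of the $i$ and $j$ coordinates.

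The modulus part of that proviso is immediate from the product rule, since $a_i(c)=a_i(z_2)/c(z_2)$ gives $|(\hat c)_i|^2=|(\hat z_2)_i|^2/(|(\hat z_2)_i|^2+|(\hat z_2)_j|^2)$, and likewise for $j$. The delicate step, which I expect to be the main obstacle, is the relative phase: I must show that the phase between the $i$ and $j$ components of the foot $c$, read in the intrinsic gauge of $q_{ij}$, agrees with the phase $\arg(\hat z_2)_j-\arg(\hat z_2)_i$ assigned through the vertex $a_1$. Because the axis $q_{ij}$ with $i,j\neq1$ was never used to define $\bar z_2$, this is a consistency theorem rather than a convention, and it amounts to the absence of phase monodromy around the triangle of axes $q_{1i},q_{1j},q_{ij}$. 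I would establish it by working entirely inside the plane spanned by $a_1,a_i,a_j$, using Corollary~\ref{onepointincommon} to locate the intersections of these three axes and the normal-line machinery of Lemma~\ref{ppptheorem} to express the foot of $z_2$ on $q_{ij}$ in terms of its feet on $q_{1i}$ and $q_{1j}$, thereby matching the two determinations of the relative phase. Once this coherence holds for every triangle of the basis, the displayed identity gives $z_1(z_2)=|\hat z_1^{\,*}\!\cdot\hat z_2|^2$ for all $z_1\in\Omega({\bf a})$ and arbitrary $z_2$, which is the assertion.
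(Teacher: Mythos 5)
Your verification engine---dropping a foot of a normal line onto the axis and factoring $z_1(z_2)=z_1(c)\,c(z_2)$ via Lemma~{\bf\ref{ppptheorem}}---is the same mechanism the paper uses, but your global, vertex-anchored construction leaves the decisive step unproven. The phase coherence around the triangle of axes $q_{1i},q_{1j},q_{ij}$, which you correctly flag as the main obstacle, is not a loose end: it is essentially the entire mathematical content of the lemma, and ``I would establish it by working inside the plane \dots thereby matching the two determinations'' names a strategy, not an argument. Nothing already proved hands it to you: Proposition~{\bf\ref{maintheorem2}} fixes the identification of each axis with $\cfield P^1$ only up to an isometry of the $2$-sphere, and in particular up to complex conjugation (an anti-unitary gauge), so the ``intrinsic'' relative phase on $q_{ij}$ is not even well defined until a conjugation choice is made coherently with the choices on $q_{1i}$ and $q_{1j}$; producing such a coherent family of choices is exactly the missing proof. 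There is also a concrete defect upstream of this: your phase assignment through the vertex $a_1$ requires $(\hat z)_1\neq 0$, yet every point of an axis $q_{ij}$ with $i,j\neq 1$ is orthogonal to $a_1$ (Proposition~{\bf\ref{anyorthogonal}}), so the elements $z_1\in q_{ij}$ and the foot $c\in q_{ij}$ appearing in your own verification are assigned no phases by your map, and the comparison of $z_1(c)\,c(z_2)$ with $|\hat z_1^{\,*}\cdot\hat z_2|^2$ in global coordinates cannot even be stated until those phases are fixed---which is again the same coherence problem.

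Two smaller gaps: you never check that your defined real and imaginary parts are consistent, i.e.\ that $(\Re)^2+(\Im)^2=a_1(z)\,a_k(z)$, without which no point $\bar z\in\cfield P^{N-1}$ realizes your data; and the existence of a line through $z_2$ normal to $q_{ij}$ is asserted rather than constructed (Lemma~{\bf\ref{projectivegeomety}} gives intersection of two given lines in a $3$-dimensional subspace, not the existence of a normal one through a given point). The paper's proof is designed precisely to sidestep the monodromy question you are left with: it inducts on $N$, so the inductive hypothesis supplies a single already-coherent coordinatization of the subspace $\sys^{(N)}$ orthogonal to $a_N$; the new $N$-th coordinate is then fixed line-by-line on the pencil of lines ${\cal P}_{a_N x}$, and the final check is the explicit computation $z(y)=z(x)\,x(y)=|\hat z^{\,*}\cdot\hat y|^2$ via the same normal-line Lemma~{\bf\ref{ppptheorem}}, in which the per-line phase freedom in the new coordinate cancels because the axis element $y$ has vanishing $N$-th coordinate. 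If you want to salvage your direct construction, you should either prove the triangle-coherence statement outright or replace the vertex-anchored definition by the paper's recursion, which is what substitutes for it.
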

\begin{proof}
By induction: The case $N = 2$  is  Proposition {\bf{\ref{maintheorem2}}}. Assume true for $N  - 1$. 
 Map the basis elements $a_j$ to $\overline{a_j}$ with $j$'th component $1$ and the others $0$. For $j = 1,\cdots,N$ map the $N - 1$ dimensional subspace $\sys^{(j)}$ of $\sys$ which is orthogonal to $a_j$ to the $N -1$ dimensional  subspace $\overline{\sys^{(j)}}$ of $\overline{\sys}$ orthogonal to $\overline{a_j}$.  By hypothesis of the induction these correspondences can be made in such a way that the Born Rule holds between elements provided at least one of them belongs to one of the axes of $\overline{\sys^{(j)}}$.  For each $x \in \sys^{(j)}$ we  map the  line ${\cal P}_{a_j x}$ to the two dimensional subspace of $\overline{\sys}$ spanned by
 $\bar{x}$ and $\overline{a_j}$ which we can do in such a way that the Born Rule holds between its elements.
 \vsk
 {\centering{\hskip1.0in\parbox{3cm}{\includegraphics[width= 2.75in]{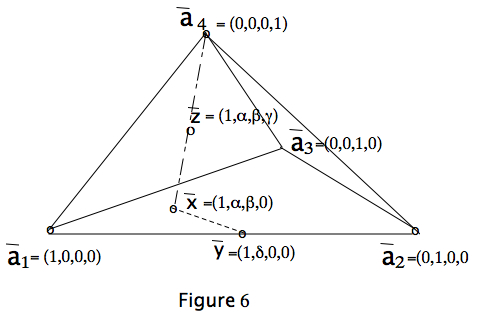} }}}
 \vsk
To simplify the next part of the  proof  let us consider the case $N = 4$ (see Figure 6)  which will make  the general argument clear.
The points of $\sys^{(4)}$ are mapped to $\overline{\sys^{(4)}}$ and so have coordinates of the form $x = (1,\alpha,\beta,0)$.
(Note that with homogeneous coordinates we can represent the elements in this way.  For example $\alpha \to \infty$ corresponds
to the element $(0,1,0,0)$.) If $z$ lies on ${\cal P}_{a_4x}$ it will be mapped to a linear combination of $x$ and $a_4$ and
hence to a point $\bar{z} = (1,\alpha,\beta,\gamma)$ for some $\gamma \in \cfield$. Now let $y$ be an element lying on one of the
axes of $\sys^{(4)}$ say ${\cal P}_{a_1 a_2}$ which is mapped to $\bar{y} = (1,\delta,0,0)$ for some $\delta \in \cfield$.
By these mappings the Born Rule holds for the pair $z,x$ and for the pair $x,y$.  Moreover the lines ${\cal P}_{a_4 x}$ and
${\cal P}_{xy}$ are normal to one another since the antipode of $x$ in ${\cal P}_{xy}$ lies in $\sys^{(4)}$ and hence is orthogonal
to the antipode $a_4$ of $x$ in ${\cal P}_{a_4 x}$.  Hence by lemma {\bf{\ref{ppptheorem}}}
\begin{equation}
z(y) = z(x)x(y) = |\hat{z}^*\cdot \hat{x}|^2  |\hat{x}^*\cdot \hat{y}|^2  = 
 { |1 + \alpha^*\delta|^2 \over (1 + |\alpha|^2 + |\beta|^2 + |\gamma|^2)(1 + |\delta|^2)} = |\hat{z}^*\cdot \hat{y}|^2.
\end{equation}
In similar fashion one establishes that the Born Rule holds between every $z \in \sys$ and any element lying on
one of the axes of $\sys$.
  \end{proof}
  \vsk
  We are now equipped to prove our main theorem: 
  \vsk
  {\bf {Theorem}}
  \vsk
Axioms I-V imply that  there is a mapping  of  the elements of $\sys$ of dimension $N$  to a $\cfield P^{N-1}$ space
 such that the Born Rule holds between every pair of elements.
\vsk
 \begin{proof}
 \vsk
 Choose a basis $a_1,\cdots,a_N$ of $\sys$  and perform the mapping described in {\bf Lemma}  {\bf{\ref{pointsonaxis}}}.
 Let $b_1,\cdots,,b_N$ be another basis of $\sys$. We have
 \begin{equation}
 \sum_{j = 1}^N b_j(s) = 1 \text { for all } s \in \sys^*.
 \end{equation}
 If $s$ lies on any of the axes ${\cal P}_{a_k a_n}$ used for the map above  the Born Rule holds  between $s$ and
 every element and in particular between $s$ and $b_1,\cdots,b_N$.  Hence:
 \begin{equation}
 \sum_{j = 1}^N  |\hat{b_j}^*\cdot \hat{s}|^2 = 1 .
 \end{equation}
Apply this result when $s = a_k$ for $k = 1,\cdots,N$  to obtain
 \begin{equation}
 \sum_{j = 1}^N  |\hat{b_j}^*\cdot \hat{a_k}|^2 = 1 \text{  for  } k = 1,\cdots,N .
 \end{equation}
 Also apply it when $s$  is an arbitrary point on the axis ${\cal P}_{a_k a_n}$ for all
 pairs $k,n$ with $k \neq n$ i.e.\ for  $\hat{s} = \alpha \hat {a}_k + \beta \hat{a}_n$
 where $\alpha$ and $\beta$ are arbitrary. 
For $k = n$ we have
\begin{equation}
\sum_{j=1}^N |\hat{b}_j^*\cdot\hat{a}_{k}|^2  = 1,
\end{equation}
and for $k\neq n$
\begin{equation}
(|\alpha|^2 + |\beta|^2)^{-1}\sum_{j=1}^N |\hat{b}_j^*\cdot(\alpha\hat{a}_k + \beta\hat{a}_n) |^2 = 1
\end{equation}
The second equation works out to
\begin{equation}
{|\alpha|^2\over{|\alpha|^2 + |\beta|^2}}\sum_{j=1}^N |\hat{b}_{j}^*\cdot \hat{a}_k|^2 + 
{|\beta|^2\over{|\alpha|^2 + |\beta|^2}}\sum_{j=1}^N |\hat{b}_{j}^*\cdot \hat{a}_n|^2 + 
\end{equation}
\begin{equation}
2\Re{{{\alpha^* \beta }\over{ |\alpha|^2 + |\beta|^2}}}\sum_{j=1}^N (\hat{a}_{k}^*\cdot \hat{b}_j) 
(\hat{a}_{n}\cdot \hat{b}_j^*) = 1
\end{equation}
and since $\alpha$ and $\beta$ are arbitrary the sum must vanish for $k \neq n$. Hence
\begin{equation}
\sum_{j=1}^N (\hat{a}_{k}^* \cdot \hat{b}_j)
(\hat{a}_{n}\cdot \hat{b}_{j}^*) = \delta_{kn}
\end{equation}
Thus the matrix $M_{kj} = \hat{a}_{k}^*\cdot \hat{b}_j$ is unitary. Hence the coordinates
are transformed by a unitary transformation under a change of basis.  Given any pair of elements
we can choose a basis  such that they lie on an axis and hence such that the Born Rule holds
between them.
But the scalar
product is invariant under unitary transformations whence it follows that the Born Rule remains
valid in every basis.
\end{proof}
\vsk
We have now completed proof that Axioms I-V imply the CRQM.  In the usual Dirac notation  we have established that pure states $x$ in $ \sys$ obeying these axioms correspond to  kets $\ket{x}$ such that the Born Rule $x(y) =|\bra{x}y\rangle|^2$ holds.
\vsk
\section{Discussion}
While our derivation of the CRQM from Axioms I-V has been lengthy, we have obtained the result with no dimensional restriction and
without having to use  Gleason's Theorem which is itself quite lengthy. Moreover our inductive proof of the extension to $N > 2$ has
gotten the result with no more than the rudiments of projective geometry, in particular avoiding the proof that
if the planes of an $N$-dimensional projective space are $\cfield P^2$ the space is $\cfield P^{N-1}$\cite{BUM}. 
\vsk
A key role in the axiomatic system presented here was played by Axiom IV, motivated by the entropic Turing-von Neumann effect in
which the loss of purity of a state resulting from a measurement is reduced by an intermediate measurement.   
It is remarkable
that the impurity can be made arbitrarily small by a sufficiently large number of intermediate measurements. Indeed, one can
reproduce unitary (Schr\"{o}dinger) dynamics with arbitrarily high accuracy in this way. For consider a sequence of
measurements performed on a system in frames containing states $\ket{n}$ which are related by
$\ket{n} = e^{-i\tau H}\ket{n-1}$ for some Hamiltonian $H$ and time interval $\tau$. The probability $p_n$ for finding the system in state $\ket{n}$ after $n$ measurements starting with $\ket{0}$  is at least $|\bra{n}n-1\rangle|^2 p_{n-1}$.  If $\tau\Delta << 1$, where
$\Delta = (\bra{0}H^2\ket{0} - \bra{0}H\ket{0}^2)^{\half}$ is the dispersion of $H$ we have
\begin{equation}
|\bra{n}n-1\rangle|^2 \approx 1 - (\tau \Delta)^2,
\end{equation}
whence
\begin{equation}
1 \geq p_n \geq ( 1 - (\tau \Delta)^2)^n.
\end{equation}
At a  time $T$, when $ n = T / \tau$ measurements have been made, the right side approaches $e^{-\tau T \Delta^2}$.
Thus given any finite time $T$ and dispersion $\Delta$ {\em one can choose a sufficiently small interval $\tau$ between measurements that the initial state $\ket{0}$ is ``coaxed" by the sequence of measurements into a mixed state so dominated by $\ket{n}$  that the entropy increase is arbitrarily small.} (The so-called ``watchdog" effect or ``quantum Zeno effect" occurs when the repeated measurements are made in a frame containing the initial state which supresses the evolution altogether.)
\vsk
Thus, while collapse due to measurement cannot be reproduced by unitary dynamics, a fact that gives rise to the measurement problem, we see that the converse is not true, i.e.\  {\em unitary dynamics can be reproduced to arbitrary accuracy by a sequence of collapses.} It is thus theoretically possible that what appears to us as Schr\"{o}dinger evolution is  a very good approximation to a process in which an interaction Hamiltonian
``guides" a sequence of collapse processes happening during  very small time intervals.  By choosing $\tau$ small enough
 no increase in entropy would be detected even on a cosmic time scale.
\vsk

\section{Appendix  }
For conventient reference we here reproduce the derivation\cite{FIV1,FIV2} of the relationship
between $p$ and the $d$-metric in the conventional model and  local hidden variable
theories.  
\vsk
In a local hidden variable theory one has a set $\Lambda$ with a measure $\mu$ such that
\begin{equation}
p(x,y) = \mu(\Lambda(x)\cap\Lambda(y)),\quad \mu(\Lambda(x)) = 1, \forall x.
\end{equation}
To evaluate the $d$-metric we must compute the supremum over $z$ of.
 $|\mu(\Lambda(x)\cap\Lambda(z)) - \mu(\Lambda(y)\cap\Lambda(z))|$. But we note
that the contribution coming from any overlap of $\Lambda(x)$ and $\Lambda(y)$ will cancel. Hence one
can compute the $z$ maximizing the expression  as if the sets are disjoint. This occurs when
either $z = x$ or $z = y$ and gives $1 - \mu(\Lambda(x)\cap\Lambda(y))$ whence 
\begin{equation}
d(x,y) = 1 - p(x,y)
\end{equation}
which disagrees with (\ref{dtop}).
\vsk 
In the CRQM we have (reverting to Dirac notation):
\begin{equation}
p(x,y) = |\bra{x}y\rangle|^2 =  Tr(\pi(x)\pi(y)),\;\; \pi(z) \equiv \ket{z}\bra{z},
\end{equation}
whence
\begin{equation}
d(x,y) = \sup_z|Tr(\pi(x)\pi(z)) - Tr(\pi(y)\pi(z))| = \sup_z|\langle z|\pi(x) -\pi(y)|z \rangle|.
\end{equation}
But this is just the largest eigenvalue of $\pi(x) - \pi(y)$. Since the $\pi$'s  are
projectors:
  \begin{equation}
(\pi (x) - \pi (y))^3 = (1 - |\langle x|y \rangle|^2)(\pi(x) - \pi(y))
\end{equation}
and one reads off the largest eigenvalue to obtain (\ref{dtop}).
\vsk
\vsk
 \centerline{{\bf {Acknowledgements}}}
 \vsk
I am gratefull to Prof. Philip Goyal and the Perimeter Institute for hosting the
``Reconstructing Quantum Mechanics" conference in 2009 which led me to revisit
my 1994 paper and resulted in significant simplifications in the argument.
Conversations with Prof. Wm. Wootters at the conference  led me to think further
about the problem of eliminating exotic alternatives to conventional quantum mechanics
which together with the paper of Stotland et al. stimulated the approach taken in this paper.


\begin{thebibliography}{99}
\bibitem[1]{BvN} G.D. Birkhoff and J. von Neumann, Ann. Math. {\bf 37}, 823 (1936).
\bibitem[2]{GLEA} A.M. Gleason, J. Math. \& Mech. {\bf 6} 885 (1957).
\bibitem[3]{SYK}S. Sykora, Journal of Statistical Physics, Vol.11, No.1, pp 17-27 (1974)
\bibitem[4]{WOOT} W. Wootters, Foundations of 
Physics 20, 1365 (1990)
\bibitem[5]{SPBC}A. Stotland, A. Pomeransky, E. Bachmat, and D. Cohen Europhysics letters, Vol. 67, pp. 700--706, 2004, arXiv:quant-ph/0401021v2
\bibitem[8]{MIEL} B. Mielnik, Commun. Math. Phys. {\bf 15} 1-46 (1969). 
\bibitem[9]{HARDY}L. Hardy arXiv:``Quantum Theory From Five Reasonable Axionms" quant-ph/0101012v4
\bibitem[10]{TUR} Christof Teuscher \& Douglas Hofstadter (2004). `` Alan Turing: Life and Legacy of a Great Thinker. Springer." p. 54. ISBN 3540200207.
\bibitem[11]{TvN} J. von Neumann, ``Mathematical Foundations of Quantum Mechanics", Princeton University Press, (1955) p. 365.
\bibitem[12]{SIER} W. Sierpinski, ``Introduction to General Topology", U. Toronto Press (1934).
\bibitem[13]{FIV1} D.I. Fivel, Phys. Rev. Lett. {\bf 67}, 285 (1991)
\bibitem[14]{BUM}R.J. Bumcrot, ``Modern Projective Geometry", Holt, Reinhart, Winston, Inc. p.23
(1969)
\bibitem[15]{FIV2}D. I. Fivel, Phys Rev {\bf A  50}, No 1, 2108, (1994) arXiv:hep-th/9405042v1



 \end{thebibliography}
\end{document}